\documentclass[anonymous]{article}
\usepackage{ijcai26}

\usepackage{times}
\usepackage{soul}
\usepackage{url}
\usepackage[hidelinks]{hyperref}
\usepackage[utf8]{inputenc}
\usepackage[small]{caption}
\usepackage{graphicx}
\usepackage{amsmath}
\usepackage{amsthm}
\usepackage{booktabs}
\usepackage{algorithm}
\usepackage{algorithmic}
\usepackage[switch]{lineno}

\usepackage{balance} 

\usepackage{amsfonts}
\usepackage{comment}
\usepackage{thmtools}
\usepackage{mathtools}
\usepackage{tikz}
\usepackage{array}
\usepackage{caption}
\usepackage
[
    subrefformat = simple, 
    labelformat = simple,  
]
{subcaption}         
\usepackage{wrapfig} 
\usepackage{cleveref}

\newtheorem{theorem}{Theorem}

\newtheorem{lemma}[theorem]{Lemma}
\newtheorem{corollary}[theorem]{Corollary}
\newtheorem{proposition}[theorem]{Proposition}
\newtheorem{definition}[theorem]{Definition}

\title{Temporal Network Creation Games: The Impact of Flexible Labels}

\author{
Hans Gawendowicz$^1$
\and
Nicolas Klodt$^1$\and
Aleksandrs Morgensterns$^1$\And
George Skretas$^1$\\
\affiliations
$^1$Hasso Plattner Institute, University of Potsdam, Germany\\
\emails
\{hans.gawendowicz, nicolas.klodt\}@hpi.de,
aleksandrs.morgensterns@gmail.com,
georgios.skretas@hpi.de
}

\begin{document}

\maketitle

\begin{abstract}


A crucial aspect of research is understanding how real-world networks, such as transportation and information networks, are formed. 
A prominent model for such networks was introduced by \cite{fabrikant_network_2003} and extended by \cite{bilo_temporal_2023}, incorporating temporal graphs to better represent real-world networks. In this model, there is a given host graph with $n$ agents (represented by nodes) and time labels on the edges. Each agent can establish connections by purchasing edges. This makes the edges present at the time steps given by the time labels of the host graph. The goal of each agent is to reach as many other agents as possible while minimizing the number of edges bought. 
However, this model makes the simplifying assumption that each edge comes with predetermined time steps. We address this deficiency by extending the model of Bilo et al. \cite{bilo_temporal_2023} to allow agents to purchase edges and to decide when they appear. To capture a variety of real-world applications, we study two reachability models and several cost functions based on the label an agent assigns to an edge. For these settings, we provide proofs of existence of Nash equilibria, as well as lower and upper bounds on the Price of Anarchy and Price of Stability.
\end{abstract}

\section{Introduction}

In this paper, we aim to investigate the formation of logistics/transportation networks and information networks. In particular, we are interested in the subset of these networks formed by the actions of independent agents. An example of such a logistics network is the supply chain of companies involved in the production of goods. Every company has a factory and is interested in transporting goods from it to other locations. Their primary interest in these connections is to ensure that goods from their factory can reach all other locations in the network while minimizing transportation costs. For an example of information networks, consider the communication network in a company. The company's personnel must conduct meetings so that the necessary information reaches the relevant people. However, working time is important, and they want to minimize the time spent in meetings.

Several models have been considered for studying these types of networks \cite{Anshelevich_2004,Bala_2000,Jackson_1996}. Their main key ingredient is that they assume that each node of the network is a separate entity, a selfish and non-cooperative agent. One of the most studied models is the \emph{Network Creation Game} (NCG)~\cite{fabrikant_network_2003}. In this model, selfish agents act as nodes in a network that can form costly connections to others to gain a central position within the emerging network. The goal of the agents is to maximize the number of other nodes they can reach while minimizing the cost they incur. Early studies in this model assume that a network's connections are static, i.e., always available, which is unrealistic given the examples we have considered so far. Recently, a generalization of this model, called \emph{Temporal Network Creation Games}, has been introduced \cite{bilo_temporal_2023}, where it is assumed that the agents can buy connections that are available only in a particular moment in time, i.e., a plane flies from Frankfurt to London with departure time at 3 pm. 

In this line of work \cite{bilo_temporal_2023,bilo_temporal_2024} , the authors have studied multiple variants of this model. The key differences between these variants refer to whether (i) the agents can buy only incident connections or any connection in the network, (ii) there are multiple connections that can be bought between two nodes of the network, and (iii) the agents want to reach every other node in the graph or just a subset of them. However, for every one of these variants, the model made the simplifying assumption that every node must buy connections at predetermined times, and they are not allowed to pick them. This is unrealistic for the same scenarios we want to study, as a logistics company would be allowed to schedule its vehicles to run at whatever time suits it best. Our goal in this paper is to generalize the model introduced by Bilo et. al \cite{bilo_temporal_2023}, to allow agents to buy a connection and decide at which point in time it is available. After having done that, we will study the behavior of this model.

\subsection{Our Contribution}
We introduce a new model for Temporal Network Creation Games, where the network is modeled as an empty graph, with each vertex representing a selfish agent. Every agent can buy an edge between themselves and another agent and decide at which point in time the edge is available by assigning a \emph{label} to it. The goal of each agent is to reach a subset of the graph's vertices while minimizing the cost of the edges they buy. Since we aim to model both transportation/logistics and information networks, we must consider how reachability operates in each. In both cases, a vertex $u$ can reach a vertex $v$ if there exists a path of consecutive edges that have increasing labels. Since the transportation of goods is not instantaneous, it is more appropriate to assume that labels must be strictly increasing, whereas in information networks, non-strictly increasing labels can suffice. 

We also have to consider how the cost of an edge relates to the label assigned to it. For example, the cost of transporting goods may not depend on the time, or in a company, late meetings might be discouraged. To account for this, we consider different cost functions. In particular, we consider four different label cost functions: (i) a label cost function where every label costs the same, (ii) two related label cost functions where earlier/later labels cost more, (iii) a label cost function where every label costs the same but no two adjacent edges can have the same label and (iv) a label cost function where every label costs the same but there is a lower bound on the possible label values that can be assigned.

For most combinations of reachability model and label cost function, we provide proofs for the existence of Nash equilibria and bounds on the Price of Anarchy (PoA) and Price of Stability (PoS). The bounds and existence results are summarized in Table \ref{tab:placeholder}. The paper is organized as follows. In Section \ref{preliminaries}, we define the model and different variants we are exploring. We then dedicate one section to each label cost function and, within it, analyze both reachability models. In Section \ref{chap:flexiblelabel}, we analyze the Uniform Label cost function, where the cost of all labels is the same. In Section 
\ref{sec:monotone}, we analyze the Monotone Label cost function, where the cost depends on the particular value that a label has. In Section \ref{sec:proper}, we analyze the Proper Label cost function, where the cost of all labels is the same, but no two adjacent edges can have the same label value. In Section \ref{sec:monotone}, we analyze the Arbitrary Low Label function, where the cost of all labels is the same, but there is a lower bound on the value of a label. 

\begin{table*}[]
    \centering
\begin{tabular}{||l|c c c|c c c||} 
 \hline
 & \multicolumn{3}{c|}{Non-strict Paths} & \multicolumn{3}{c||}{Strict Paths} \\ 
 \cline{2-7}
 & PoS & PoA & Existence & PoS & PoA & Existence \\
 \hline\hline
 Any Label 
 & $1$ [\ref{prop:posuniform}] & $2 - \mathcal{O}\left(\frac{1}{\sqrt{n}}\right)$ [\ref{thm:unilabel2}] & Yes [\ref{prop:posuniform}]
 & $1$ [\ref{thm:unilabpos}] & $\Theta(n)$ [\ref{thm:anystrictPoA}] & Yes [\ref{thm:anystrictPoA}] \\ 
 \hline
 Label Cost $f_\mathbf{s}^\downarrow$ 
 & $1$ [\ref{thm:poaposfdown}] & $1$ [\ref{thm:poaposfdown}] & Yes [\ref{thm:poaposfdown}]
 & $\Theta(n)$ [\ref{strictdownpos}] & $\Theta(n)$ [\ref{strictmonotonepoa}] & Yes [\ref{strictmonotonepoa}] \\ 
 \hline
 Label Cost $f_\mathbf{s}^\uparrow$ 
 & $1$ [\ref{nonstrictposup}] & $[2 - \mathcal{O}\left(\frac{1}{\sqrt{n}}\right), 3]$ [\ref{monotonepoaup}] & Yes [\ref{nonstrictposup}]
 & ? &  $\Theta(n)$ [\ref{strictmonotonepoa}] & Yes [\ref{strictmonotonepoa}] \\ 
 \hline
 Proper Labels 
 & $1$ [\ref{propernonstrict}] & $\Omega(\log n)$ [\ref{propernonstrict}] & Yes [\ref{propernonstrict}]
 & $1$ [\ref{properpos}] & $\Omega(\log n)$ [\ref{properpoa}] & Yes [\ref{properpos}] \\ 
 \hline
 Arbitrary Low Labels 
 & $1$ [\ref{nonstrictarbitrary}] & $2 - \mathcal{O}\left(\frac{1}{\sqrt{n}}\right)$ [\ref{nonstrictarbitrary}] & Yes [\ref{nonstrictarbitrary}]
 & $1^*$ [\ref{arbitrarypos}] & $1 + \mathcal{O}\left(\frac{1}{n}\right)^*$ [\ref{arbitrarypoa}] & ? \\ 
 \hline
\end{tabular}

    \caption{PoS and PoA bounds and existence of NE. *For strict path arbitrarily low labels the results apply only for $n \leq 6$.}
    \label{tab:placeholder}
\end{table*}

\subsection{Related Work}
This paper is part of a recent line of work combining two formerly separate areas of research: Network Creation Games and Temporal Networks. 

Network Creation Games were pioneered in \cite{fabrikant_network_2003} on static graphs. There, agents act as nodes in a network that can buy costly edges to other agents to minimize the distance to all other agents while keeping the cost of buying edges as low as possible. This sparked a plethora of studies on network creation games, including improving bounds on the PoA \cite{poa_mostly_constant_2013}, pursuing the famous tree conjecture \cite{bilo_tree_conjecture_2020,dippel_2022}, and analyzing different degrees of cooperation \cite{demaine_price_2009,friedrich_2023}, but also introducing various variants to make the initially very simplistic model more realistic. This includes introducing a host graph restricting the edges that can be bought \cite{demaine_price_2009} or even placing the agents onto an underlying geometric graph with distance metrics as edge weights, as in \cite{abam_geometric_2019} and \cite{bilo_geometric_2020}. While this explosion of research efforts opened many new directions, dynamic graphs were only recently introduced into the field by \cite{bilo_temporal_2023}, formulating the original temporal network creation game model. They proved the existence of Nash equilibria for lifetime $t \leq 2$ and showed upper bounds on the PoA for Nash and Greedy equilibria. \cite{bilo_temporal_2024} later formalized the terminal model, showed the existence of equilibria for number of terminals $|T| \leq 2$, and analyzed non-local edge buying.

Since the size of the worst Nash equilibria in this temporal network creation model is bounded upwards by the largest minimal temporal spanner, equilibria in temporal network creation games are closely related to minimal spanners in temporal graphs. \cite{kempe_connectivity_2002} introduced temporal graphs and constructed a minimal spanner with $\Theta(n\log n)$ edges. \cite{axiotis_size_2016} improved the upper bound to $\Theta(n^2)$. \cite{casteigts_temporal_2021} showed that any complete and connected simple (only one label per edge) temporal graph admits a temporal spanner with at most $\mathcal{O}(n\log n)$ edges. \cite{angrick_how_2024} recently advanced the result by showing that edge-pivotable graphs admit linear-size spanners. Whether all complete connected simple temporal graphs have linear-sized spanners remains an open question. For temporal graphs with multiple labels per edge \cite{christiann_inefficiently_2023} showed that there are connected graphs where all labels are needed for connectivity. Algorithmically, \cite{axiotis_size_2016} showed that selecting a smallest subset of given edges in a temporal graph to make it connected is NP-hard, which is closely linked to deciding optimal strategy in temporal network creation games. Under certain conditions, they provided polynomial-time algorithms.

Other types of temporal graphs were also studied, such as random temporal graphs in \cite{casteigts_sharp_2023}. As most versions of problems on static graphs focus on the distance between nodes, the study of low-stretch temporal graphs as in \cite{bilo_sparse_2022} can be of interest. Results can be used to extend the model by including distance in the cost function. Different types of temporal reachability were also studied in \cite{bilo_temporal_2023}, as well as reachability graphs in \cite{whitbeck_temporal_2012}.


Our research distinguishes itself from current results by being the first to examine other goals (cost functions) for the vertices, and by separating the edge labeling from the host graph by making the choice of the time label part of the agents' strategic decisions.

\section{Preliminaries}
\label{preliminaries}
We start by defining key terms in temporal graph and game theory that are necessary for our results.
\subsection{Temporal Graphs}
    A \emph{temporal graph} $G = (V_G, E_G, \lambda_G)$ is an undirected labeled graph with the labeling function $\lambda_G\colon E_G \mapsto \mathbb{Z}_{\leq t}$. We call $t$ the \emph{lifetime} of $G$. Further, a labeling is called proper if no 2 adjacent edges have the same label. We will sometimes refer to an edge with label $k$ as $k$-edge for brevity.
    
    A \emph{temporal path} in $G$ is defined as a sequence of edges $(e_1, \dots, e_i) \in E_G$ such that $\forall 1\leq k\leq i-1\colon \lambda_G(e_k) \leq \lambda_G(e_{k+1})$. A \emph{strict temporal path}, sometimes referred to as a strict journey, is defined as a temporal path where $\forall 1\leq k\leq i-1: \lambda_G(e_k) < \lambda_G(e_{k+1})$. For any vertex $v\in V_G$ we define $R_G(v)$ as the set of vertices which can be reached via a temporal path from $v$. Unless obvious, we denote $R_G(v)^\leq$ as non-strict reachability and $R_G(v)^<$ as strict reachability. For brevity, we use $R_G(v)$ after establishing which reachability we are using for each model. If $\forall v\in V_G\colon R_G(v) = V_G$, we say $G$ is \emph{temporally connected}. A temporally connected subgraph $G' = (V_{G'}, E_{G'}, \lambda_G|_{E_{G'}})$ with $V_{G'} = V_G$ is called a \emph{temporal spanner} (of $G$). Furthermore, for $t \in \mathbb{N}$, let $R_{G, t}(v)$ denote  the set of vertices $v$ can reach using temporal paths $(e_1, \dots, e_i)$ with $\lambda_G(e_1) \geq t$.

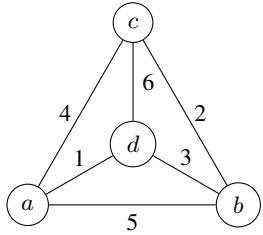
\begin{figure}
    \centering
        \begin{tikzpicture}[font=\small, scale=0.7]
            \node[circle, draw] (A) at (0,0) {$a$};
            \node[circle, draw] (B) at (4,0) {$b$};
            \node[circle, draw] (C) at (2,3.464) {$c$};
            \node[circle, draw] (E) at (2,1.155) {$d$};

            \draw (A) -- node[below] {5} (B);
            \draw (B) -- node[right] {2} (C);
            \draw (C) -- node[left] {4} (A);
            \draw (E) -- node[above] {1} (A);
            \draw (E) -- node[above] {3} (B);
            \draw (E) -- node[right] {6} (C);
        \end{tikzpicture}
    \caption{Temporal graph with lifetime 6. $(\{a, d\}, \{d, b\})$ is a temporal path, while $(\{a,c\}, \{c, b\})$ is not.}
    \label{tempgraphexamp}
\end{figure}
An example for a temporal graph can be seen in \Cref{tempgraphexamp}. We now define structures that will be useful in later proofs. 
\begin{definition}[$k$-label tree]
We call a temporal graph $G = (V_G, E_G, \lambda_G)$ a \emph{$k$-label tree} if $(V_G, E_G)$ is a tree and $\forall e \in E_G: \lambda_G(e) = k$.
\end{definition}
\begin{definition}[Reachability Tree]
    Let $G = (V_G, E_G, \lambda_G)$ be a temporal graph and let $v \in V_G$. A \emph{reachability tree} $\mathcal{T}_G(v)$ rooted at $v$ is a temporal subgraph of $G$ such that $\mathcal{T}_G(v)$ is a tree and $R_{\mathcal{T}_G(v)}(v) = R_G(v)$.
\end{definition}
\subsection{Our Model}
Next, we introduce our game-theoretic model. An instance of the game consists of a set of agents $V$. Each agent $v\in V$ chooses a strategy $S_v\subseteq V\times \mathbb{Z}$ which encodes to which other agents and at which points in time $v$ wants to form a connection to.
The strategy profile $\mathbf{s}=\bigcup_{v\in V} \{(v,S_v)\}$ is the combination of the strategies of all agents together. By playing their strategies, the agents form a temporal graph

\begin{equation*}
    G(\mathbf{s}) = \left( V, \bigcup_{u\in V}\bigcup_{(v, l) \in S_u}\{u, v\}, \lambda_{G(\mathbf{s})} \right)\end{equation*}
with
\begin{equation*}
    \lambda_{G(\mathbf{s})}(\{u, v\}) = \min\{l \in \mathbb{N} | (v,l) \in S_u \lor (u, l) \in S_v\}
\end{equation*}
In short, we write $G(\mathbf{s}) = (V_\mathbf{s}, E_\mathbf{s}, \lambda_\mathbf{s})$. We take the minimal label bought between two vertices as both can only profit from reaching the other at an earlier time. This also disincentivizes the agents from buying multiple labels, meaning we can analyze only graphs with a single label per edge.\\
We now introduce a cost function for a given strategy profile and a given agent. Let $f_\mathbf{s}: E_{G(\mathbf{s})} \mapsto \left[0, \frac{1}{|V|}\right)$ be a label cost function that assigns cost to the label of a bought edge. Let also $P_\mathbf{s}(v)$ be some penalty function that punishes behaviors between agents that we want to discourage
Finally, $R_{G(\mathbf{s})}(v)$ is the reachability function of $v$ that can be either strict or non-strict. We calculate the cost for each agent using the number of edges bought, additional costs for labels chosen, penalties from a penalty function, and not being able to reach other agents. We define a cost function as
\begin{equation*}
    c(u, \mathbf{s}) = |S_u| + \smashoperator{\sum_{(v, l) \in S_u}}f_\mathbf{s}(\{u, v\}) + P_\mathbf{s}(u) + K\cdot (|V_{G(\mathbf{s})} \setminus R_{G(\mathbf{s})}(u)|)
\end{equation*}
with $K\geq 2$ as a sufficiently large constant. Note that the total cost an agent can save by minimizing the label cost is less than they can save by buying one less edge, meaning they will prioritize ensuring reachability and penalties, then number of edges and then label cost. We refer to a strategy profile $\mathbf{s}$ as a Nash Equilibrium (NE) if there is no agent $v$ such that there is a different strategy $S_v'$ for that agent with $c(v, \{\mathbf{s}\setminus (v,S_v)\} \cup (v, S_v')) < c(v, \mathbf{s})$. Further, we refer to $C(\mathbf{s}) = \sum_{v\in V_{G(\mathbf{s})}}c (u, \mathbf{s})$ as the social cost of $\mathbf{s}$ and the social optimum OPT as a strategy profile $\mathbf{s}$ which minimizes the social cost. We can now define the two metrics with which we can measure the quality of equilibria.

We define the Price of Anarchy (PoA) as the ratio of the social cost of the NE with the highest social cost and the social cost of the social optimum. Similarly, let the Price of Stability (PoS) be the ratio of the social cost of the NE with the lowest social cost and the social optimum. More formally, let $\text{NE}$ be the set of Nash Equilibria, then
\begin{align*}
    \text{PoA} &= \frac{\max_{\mathbf{s} \in \text{NE}} C(\mathbf{s})}{C(\text{OPT})}\quad\text{ and }\quad
    \text{PoS} &= \frac{\min_{\mathbf{s} \in \text{NE}} C(\mathbf{s})}{C(\text{OPT})}.
\end{align*}
\subsection{Variants}
We introduce the specific scenarios we will analyze. For the reachability function we always consider both strict and non-strict reachability. For the label cost function we define the following two.
We define
\begin{equation*}
    f_\mathbf{s}^\uparrow(e) = \frac{|\{e'\in E_{G(\mathbf{s})}, \lambda_{G(\mathbf{s})}(e') > \lambda_{G(\mathbf{s})}(e)\}|}{|E_{G(\mathbf{s})}|} \cdot \frac{1}{|V_{G(\mathbf{s}})|}
\end{equation*}
as the function that incentivizes picking labels at least as high as the rest and
\begin{equation*}
    f_\mathbf{s}^\downarrow(e) = \frac{|\{e'\in E_{G(\mathbf{s})}, \lambda_{G(\mathbf{s})}(e') < \lambda_{G(\mathbf{s})}(e)\}|}{|E_{G(\mathbf{s})}|} \cdot \frac{1}{|V_{G(\mathbf{s}})|}
\end{equation*}
as the function that incentivizes picking labels at least as low as the rest. Since the first factor is always less than 1, both functions map to $\left[0, \frac{1}{|V|}\right)$. Lastly, we define our penalty functions. If we do not want to assign varying costs to labels, we use the cost function $f^0_\mathbf{s}(e) = 0$. Next, since it is common in literature to only allow positive labels, we define a penalty for choosing labels smaller than 1. Let
\begin{equation*}
    \text{P}_\mathbf{s}^{> 0}(u) = K\cdot |\{(v, l) \in S_u \mid l < 1\}|
\end{equation*}
with $K \geq |V|$ as a sufficiently large constant. Similarly, we can define another penalty function that penalizes agents for buying 2 edges with the same label or buying an edge towards another one with the same label as one already adjacent. Let
\begin{align*}
    \text{P}_\mathbf{s}^{\neq}(u) &= K\cdot \mathbf{1}\left( \exists (v, l) \in S_u, w \neq v \in V: (w, l) \in S_u \right) \\
    & + K\cdot \mathbf{1}\left(\exists (v,l) \in S_u, w \in V: (w, l) \in S_v \right)
\end{align*}
with $K \geq |V|$ as a sufficiently large constant. This forces agents to facilitate proper labeling.

Let $\mathcal{R}\in\{<,\leq\}$ denote the chosen reachability notion,
$\mathcal{F}\in\{\uparrow,\downarrow,0\}$ the chosen label cost function,
and $\mathcal{P}\subseteq\{\geq 0,\neq\}$ the chosen set of penalty functions.
A \emph{variant} of the model is specified by the tuple
$\mathcal{M}=(\mathcal{R},\mathcal{F},\mathcal{P})$. For a fixed variant \(\mathcal{M}\), the cost of agent \(u\) under strategy profile \(\mathbf{s}\) is
\begin{align*}
c^{\mathcal{M}}(u,\mathbf{s}) &=
|S_u|
+ \sum_{(v,l)\in S_u} f^{\mathcal{F}}_{\mathbf{s}}(\{u,v\})\\
&+ \sum_{P\in\mathcal{P}} P_{\mathbf{s}}(u)
+ K\cdot \bigl|V \setminus R^{\mathcal{R}}_{G(\mathbf{s})}(u)\bigr|.
\end{align*}
For a variant $\mathcal{M}$ we refer to the PoA and PoS values as $\text{PoA}^{\mathcal{R}, \mathcal{F}}_\mathcal{P}$ and $\text{PoS}^{\mathcal{R}, \mathcal{F}}_\mathcal{P}$ respectively.

\section{Uniform Label cost}
\label{chap:flexiblelabel}
We start with the simplest model, where agents have complete freedom to decide what edges and labels they buy, as all time steps are priced equally.

\subsection{Non-strict Paths}

We first examine the non-strict model, where paths must be non-strictly increasing in time. We provide a tight PoS bound that also guarantees the existence of equilibria for this model. Additionally, we show a very good PoA bound.

\begin{proposition}
    \hypertarget{prop:posuniform}
    In the non-strict uniform label cost model, $\text{PoS}_{> 0}^{\leq, 0} = 1$.\label{prop:posuniform}
\end{proposition}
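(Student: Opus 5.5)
We prove the statement by exhibiting a socially optimal strategy profile and checking that it is a Nash equilibrium. Let $n=|V|$. The natural candidate is a star in which every edge carries label $1$. Fix a center $c\in V$. Every leaf $v\neq c$ plays $S_v=\{(c,1)\}$, and $c$ plays $S_c=\emptyset$.

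\textbf{Step 1 (optimality).} The label cost is $f^0\equiv 0$. In the star all labels equal $1$, so no agent incurs a penalty under $\text{P}^{>0}$. Every path has constant labels, so it is a non-strict temporal path, and every agent reaches all others. Hence the social cost is $n-1$.

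Any profile with finite reachability loss must have a connected underlying graph, which needs at least $n-1$ edges. Alternatively, some agent fails to reach someone and pays at least $K\ge 2$. I will argue that the social cost of every profile is at least $n-1$. Each missing pair costs $K$. Let the underlying graph have $m$ edges and $k$ components. Then $m\ge n-k$. If $k\ge 2$, each of the $n$ agents misses at least one vertex, contributing at least $nK\ge 2n > n-1$. So $C(\text{OPT})=n-1$, and the star attains it.

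\textbf{Step 2 (equilibrium).} The center buys nothing and already reaches everyone, so its cost is $0$ and it cannot improve.

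Consider a leaf $v$ with cost $1$. Its only cheaper alternative is $S_v'=\emptyset$. Then $v$ becomes isolated, because nobody else buys an edge to $v$. It would pay at least $K(n-1)\ge 2(n-1)\ge 1$ for $n\ge 2$, so this is no improvement.

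Could $v$ deviate to another strategy of cost exactly $1$, or lower via a label change? Any strategy with at least one edge costs at least $1$. Any strategy with a nonpositive label incurs penalty $K$. So cost $1$ is optimal for $v$, and the star is a NE.

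Therefore $\text{PoS}^{\le,0}_{>0}\le (n-1)/(n-1)=1$. Since the PoS is always at least $1$, equality holds. This also establishes the existence of an equilibrium.

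\textbf{Main obstacle.} The only subtle point is the lower bound on $C(\text{OPT})$ when some profiles sacrifice reachability. The argument above handles it through the components count together with $K\ge 2$. A second point to verify is that the edge label is defined as the minimum over both endpoints' purchases. In the star only leaves buy, so each edge gets label $1$ and no ambiguity arises.
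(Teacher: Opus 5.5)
Your proof is correct and follows the same route as the paper: a star in which every leaf buys a $1$-edge to a fixed center is temporally connected with $n-1$ edges, which matches the tree lower bound on the social optimum. You also spell out the Nash equilibrium check and the lower bound on $C(\text{OPT})$ for profiles whose underlying graph is disconnected, both of which the paper leaves implicit.
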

\begin{proof}
    Given an arbitrary center vertex $v$, every other vertex can buy a 1-edge towards $v$, creating a 1-label tree. Since all edges have the same label and the graph is a tree, it is temporally connected. As trees are minimally connected graphs, the social optimum is also $n-1$, which implies $\text{PoS}_{> 0}^{\leq, 0} =1$.
\end{proof}


\begin{lemma}
    In the uniform label cost model, $\text{PoA}_{> 0}^{\leq, 0} < 2$.
    \label{lm:upperPoA1}
\end{lemma}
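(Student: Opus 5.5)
The plan is to bound the number of edges in an arbitrary Nash equilibrium $\mathbf{s}$ and compare it with $C(\text{OPT})$. By \Cref{prop:posuniform}, $C(\text{OPT}) = n-1$ when penalties are zero. The argument has four steps.

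\emph{Step 1: reduce to counting edges.} Fix a NE $\mathbf{s}$.
\begin{itemize}
\item No agent pays a penalty. An agent buying a label $l<1$ could raise it to $1$, or to the smallest positive label it needs. Since labels may be any positive integers, we may shift all labels so they are positive without changing the temporal order, so this relabelling does not hurt reachability.
\item Every agent reaches everyone. Otherwise it pays $K \geq 2$, whereas buying a single edge to a vertex of the spanning structure would be cheaper. So $G(\mathbf{s})$ is temporally connected.
\item No agent buys a redundant edge, and no edge is bought twice: the double purchase is dropped because $\lambda$ takes the minimum.
\end{itemize}
Hence $C(\mathbf{s}) = |E_\mathbf{s}|$, and it suffices to show $|E_\mathbf{s}| \le 2n-3$. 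That gives $\text{PoA} \le (2n-3)/(n-1) < 2$.

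\emph{Step 2: each agent buys at most two edges.} The core is a deviation argument. Let $u$ buy edges in $\mathbf{s}$, let $m$ and $M$ be the minimum and maximum labels of $G(\mathbf{s})$, and let $G'$ be the graph with all of $u$'s bought edges removed. I would let $u$ replace its whole strategy by two edges:
\begin{itemize}
\item an edge with label $m$ (after shifting, label $1$) to a vertex $v$;
\item an edge with label $M$ to a vertex $w$.
\end{itemize}
The intended reasoning:
\begin{itemize}
\item Because paths are non-strict, entering $v$ at time $m$ lets $u$ use every journey that $v$ could use in $G(\mathbf{s})$.
\item Any vertex $x$ reaches $w$ by time $M$ and then steps to $u$ at time $M$.
\item Every journey in $G(\mathbf{s})$ that passed through $u$ at some intermediate time can be simulated. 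It enters $u$ via the $M$-edge only if that is its last step. Otherwise it must be rerouted, and this is where the choice $v = w$ region matters.
\end{itemize}
The natural choice is $v$ and $w$ as neighbours of $u$ that were reached first and reached last. If the deviation reaches everything, an agent with $\ge 3$ edges strictly improves, so in NE every agent buys at most $2$ edges.

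\emph{Step 3: save one edge globally.} It then remains to show the total is at most $2n-3$ rather than $2n$. For this I would exhibit agents buying fewer than two edges. Consider the edges carrying the global minimum label $m$ and the vertices incident only to such edges (and symmetrically for $M$). An agent whose only purpose for a second edge is reaching ``late'' vertices can be shown to be satisfied by a single edge when it is an endpoint of an $m$-edge bought by someone else, or when it is the extremal vertex of an $M$-edge. A counting argument over the endpoints of the earliest and latest edges should then yield at least three units of slack.

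\emph{Main obstacle.} The hardest step is Step 2: the rerouting of journeys that used $u$ as an intermediate vertex. Removing $u$'s edges can disconnect $G'$. $u$ only cares about its own reachability (in both directions it needs to reach and be reached only in the sense of $R(u)$), but the journeys from $u$ out to others may have relied on $u$'s own edges at intermediate times. My first attempt is to note that $R(u)$ only concerns journeys \emph{starting} at $u$. Any such journey whose first edge is one of $u$'s bought edges $\{u,y\}$ can be replaced by the $m$-edge to $v$ only if $v$ reaches $y$ before $\lambda(\{u,y\})$. Since this is not automatic, I would try choosing $v$ inside a $m$-label tree component, and, if that fails, weaken Step 2 to an amortised statement: charge the edges of $u$ beyond the second to edges not bought by $u$ that it uses.
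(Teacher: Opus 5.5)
Your Step~2 is not merely unfinished. Its claim is false, so the per-agent approach cannot be completed. Take the equilibrium from the proof of Theorem~\ref{thm:unilabel2} with $n=k^2$ and $k\ge 4$ (cf.\ Figure~\ref{fig:grid}). Agent $v_i^2$ buys $k-1\ge 3$ edges of label $2$, one to a non-second vertex of each other part $V_{i'}$. Apart from that edge, each such target has only label-$1$ edges inside $V_{i'}$. No label-$1$ edge joins two parts, and a non-strict journey whose last edge has label $1$ uses only labels $\le 1$. So each target must be reached through a separate edge bought by $v_i^2$ that lands in its part, either with label $1$ or directly at the target; labels below $1$ incur a penalty $K\ge n>k-1$. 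Thus $v_i^2$ genuinely needs $k-1$ edges, and your two-edge deviation with labels $m=1$, $M=2$ fails. No constant bound on the number of edges bought per agent holds in equilibrium. Consequently Step~3, which was meant to extract ``three units of slack'' from a bound of $2n$, has nothing to build on. Also, the role you assign to the $M$-edge (letting others reach $u$) is irrelevant: $c(u,\mathbf{s})$ only involves $R_{G(\mathbf{s})}(u)$, the set of vertices that $u$ reaches.

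The missing idea is to count edges relative to one fixed reachability tree rather than per agent. This is exactly what removes your rerouting obstacle. Since $G(\mathbf{s})$ is temporally connected, pick any root $v$ and a reachability tree $\mathcal{T}_{G(\mathbf{s})}(v)$, which is a spanning tree with $n-1$ edges. Suppose an agent $u\neq v$ bought two edges outside this tree. It could drop both and buy a single label-$1$ edge to $v$. The dropped edges are not tree edges, so every journey of $\mathcal{T}_{G(\mathbf{s})}(v)$ survives. Since $1$ is the smallest penalty-free label, $u$ arrives at $v$ no later than any tree journey leaves $v$, so $u$ still reaches everyone while buying one edge fewer. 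Hence every agent buys at most one non-tree edge. The root buys none. Neither does the tree neighbour $v'$ of $v$ whose edge $\{v,v'\}$ has the smallest label among the tree edges at $v$: it enters $v$ no later than every tree journey from $v$ departs, so it reaches everyone using tree edges alone. This gives $|E_\mathbf{s}|\le (n-1)+(n-2)=2n-3$, hence $\text{PoA}_{> 0}^{\leq, 0}\le \frac{2n-3}{n-1}<2$.

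Your Step~1 conclusions are right, but the justifications need repair, since an agent cannot shift other agents' labels. Argue instead that an agent paying a penalty $K\ge n$ could buy direct label-$1$ edges to all other vertices, and an agent missing some vertex could buy a direct label-$1$ edge to each unreached vertex. Either deviation strictly lowers its cost.
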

\begin{proof}
    Let $\mathbf{s}$ be a NE with and $v \in V$. Consider a reachability tree $\mathcal{T}_{G(\mathbf{s})}(v)$. Assume another vertex $v'$ has a strategy $S_{v'}$ buying at least two edges $e_1, e_2 \in E$ that are not in $\mathcal{T}_{G(\mathbf{s})}(v)$. A different strategy $S'_{v'} = (S_{v'} \setminus \{e_1, e_2\}) \cup \{(v', v)\}$ with $\lambda_{v'}((v', v\}) = 1$ has lower costs as $v$ reaches every other vertex using $\mathcal{T}_{G(\mathbf{s})}(v)$ and thus, $v'$ reaches everyone by $(v', v)$ and the paths in $\mathcal{T}_{G(\mathbf{s})}(v)$.
    This limits the number of edges to $n-1$ in $\mathcal{T}_{G(\mathbf{s})}(v)$ plus 1 each per the $n-1$ vertices in $V_\mathbf{s} \setminus \{v\}$. Further, consider a vertex $v' \in \mathcal{T}_{G(\mathbf{s})}(v)$ such that $\lambda((v, v'\})\leq \lambda ((v, v''\})$ for all $v'' \in \mathcal{T}_{G(\mathbf{s})}(v)$. It does not need to buy any edges outside of $\mathcal{T}_{G(\mathbf{s})}(v)$ as it can reach the root at the lowest time and then everyone else from there. Thus,
    \begin{equation*}
        \text{PoA}_{> 0}^{\leq, 0}\leq \frac{2(n-1) - 1}{n-1} < 2.\qedhere
    \end{equation*}
\end{proof}

\begin{theorem}
    \hypertarget{prop:poauniform}
    In the non-strict uniform label cost model, $\text{PoA}_{> 0}^{\leq, 0} =  2 - \mathcal{O}\left(\frac{1}{\sqrt{n}}\right)$
    \label{thm:unilabel2}
\end{theorem}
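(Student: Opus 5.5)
The upper bound is already given by \Cref{lm:upperPoA1}. That argument bounds the number of edges of any NE by $2(n-1)-1$. The optimum, by \Cref{prop:posuniform}, costs exactly $n-1$, since no agent pays label or penalty costs and nobody is unreachable. So $\text{PoA}_{>0}^{\leq,0}<2$, and what remains is the matching lower bound. Concretely, for every $n$ I want a Nash equilibrium with $2n - \mathcal{O}(\sqrt{n})$ edges. Dividing by $n-1$ then gives $2-\mathcal{O}(1/\sqrt{n})$.

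My plan for the construction is as follows.
\begin{itemize}
\item Take a \emph{backbone} path or star $B$ on roughly $\sqrt{n}$ hub vertices $h_1,\dots,h_k$, with $k\approx\sqrt{n}$.
\item Attach to each hub $h_i$ a group of about $\sqrt{n}$ leaf agents.
\item Each leaf $x$ in group $i$ buys two edges. The first is an ``inbound'' edge to $h_i$ with a very high label, so that $x$ can be reached late from everyone. The second is an ``outbound'' edge with a very low label (label $1$) to some vertex from which everything can be reached afterwards.
\item Choose the labels so that the incoming edges are useless for leaving, because their label is larger than all backbone labels. Symmetrically, the outgoing edges must be useless for entering $x$, because they sit too early.
\end{itemize}
Each leaf then owns two edges, which yields $2(n-k)+\mathcal{O}(k)=2n-\mathcal{O}(\sqrt{n})$ edges in total.

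The $\sqrt{n}$ loss should come from the hubs and backbone: these $\mathcal{O}(\sqrt{n})$ vertices can get by with one edge each. I expect that balancing leaves per hub against the number of hubs is exactly what forces the $\sqrt{n}$ trade-off. Note that a single hub fails: with a single hub every leaf could attach to the center at a middle label, which would collapse the construction to a tree.

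The key step, which I expect to be the main obstacle, is verifying the NE property for a leaf $x$. I must show that no single edge $(x,w,l)$ replaces both of its edges. Such an edge would have to let $x$ reach all vertices starting at time $l$, and simultaneously let all vertices reach $x$ by time $l$. In other words, $w$ would need $R_{G,l}(w)=V$ while also being reachable from everyone by time $l$. This requires a label $l$ and a vertex $w$ that act as a ``temporal center''.

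My design goal is therefore that no vertex is both reachable by all others by time $l$ and able to reach all others from time $l$. This is where the other leaves' edges matter: each other leaf is reachable only via its late inbound edge, and can leave only via its early outbound edge. To verify it, I would fix the labels explicitly, say:
\begin{itemize}
\item outbound edges with label $1$;
\item backbone edges with labels $2,\dots,k$, increasing along $B$ and arranged so that hub $i$ is covered only within a window;
\item inbound edges with label $k+1$.
\end{itemize}
I would then check case by case, over the position of $w$ (a leaf, a hub, or a backbone vertex) and the range of $l$, that a swap either loses some vertex or saves nothing. Finally I would check that hubs cannot drop edges, and that buying fewer labels cannot help.

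If this labeling admits a center, my fallback is to use the hubs' group structure. Outbound edges from group $i$ would go to a hub of a different group, so that early departure and late arrival can only be realized through different hubs.
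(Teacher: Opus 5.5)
The upper-bound half, via the earlier lemma, is fine. The gap is in your lower-bound construction: it is designed for a two-sided objective that this model does not have. You require a replacement edge $(x,w,l)$ to let $x$ reach everyone \emph{and} let everyone reach $x$. But the cost $c(u,\mathbf{s})$ only charges $K\cdot|V\setminus R_{G(\mathbf{s})}(u)|$, the vertices that $u$ itself fails to reach; whether others can reach $u$ does not matter to $u$. So an edge bought only so that its buyer can be reached is worthless to the buyer, and your leaves are not in equilibrium. By your own design, the outbound label-$1$ edge of $x$ goes to a vertex from which everything can be reached afterwards, so that edge alone already gives $x$ full reachability. The inbound label-$(k+1)$ edge never helps $x$: arriving at $h_i$ at time $k+1$ only lets $x$ continue along other label-$(k+1)$ edges, and their endpoints $x$ already reaches through its outbound edge and the backbone. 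Dropping the inbound edge is therefore a strict improvement for every leaf. Your fallback of rerouting outbound edges does not repair this, because it still targets the wrong condition (excluding a two-sided ``temporal center'') rather than making each bought edge necessary for its buyer's \emph{own} reachability.

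That necessity is the missing idea. The paper takes $n=k^2$ and splits the vertices into $k$ groups $V_i$. Each group forms a path of label-$1$ edges, each bought by a non-hub member, so each non-hub buys exactly one edge. The hub $v_i^2$ of each group buys $k-1$ label-$2$ edges, one to a distinct non-hub vertex in each other group.
\begin{itemize}
\item A hub's target can be entered at time $\le 1$ only from inside its own group, or via the hub's own edge. Since all label-$1$ edges lie inside groups, any single edge the hub buys gives it access to its target in at most one other group. Hence the hub needs all $k-1$ of its edges.
\item A non-hub needs its $1$-edge, since without it it cannot reach its own group's hub.
\item The hubs' edges incidentally let every vertex reach all hubs, and hence all targets.
\end{itemize}
This yields $2k(k-1)$ edges against $k^2-1$ in the optimum, a ratio of $2k/(k+1)=2-\mathcal{O}(1/\sqrt{n})$. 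Note that the roles are the reverse of your plan: the roughly $\sqrt{n}$ hubs each buy roughly $\sqrt{n}$ edges and every leaf buys one, rather than leaves buying two.
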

\begin{proof}
We can construct the following NE. Let $G = (V_\mathbf{s}, E_\mathbf{s}, \lambda_\mathbf{s})$ be a temporal graph with $|V_\mathbf{s}| = k^2$ for any $k \geq 3$. We partition $V_\mathbf{s}$ into sets $V_i = \{v_i^1, v_i^2, \dots v_i^n\}$. Each $v_i^1$ buys a 1-edge towards $v_i^2$ and every $v_i^j$ for $3 \leq j \leq n$ buys a 1-edge towards $v_i^{j-1}$. Each $v_i^2$ buys $k-1$ many 2-edges towards a $v_{i'}^{j'}$ with $i' \neq i$ and $j' \neq 2$ such that no two vertices buy an edge towards the same vertex, which is possible as there are $k$ buying vertices and $k(k-1)$ vertices that can be bought towards (see \Cref{fig:grid}).

We first show that the graph is temporally connected. Each vertex inside each $V_i$ can reach the other vertices of the partition via 1-edges. Further, each vertex besides the 2nd one has a 2-edge bought towards it by a distinct $v_j^2$, thus each vertex in $V_i$ can reach all of them. As each of them also buys $k-1$ edges towards distinct edges that do not buy 2-edges, each $v\in V_i$ can reach $k + k(k-1) = k^2= n$ vertices.

We now show that it is an equilibrium. Each vertex $v$ that is buying a 1-edge needs to buy this edge, since otherwise it has exactly one other incident 2-edge, bought by some $v_i^2$, which in turn has only $k-1$ incident 2-edges and no vertex in buys towards has a different 2-edge. Thus, $v$ would only reach $k < n-1$ vertices, and they cannot improve their strategy. Each vertex that buys 2-edges, buys edges towards vertices that only have incident 1-edges from disjunct partitions. As no two partitions have a 1-edge between each other and no 2-edges exist between two vertices where none is the 2nd of a partition, it is not possible to buy a single edge to reach two such vertices. Thus, they have to buy $k-1$ edges to reach the vertices of the other partitions that do not have a 2-edge bought towards them. As each of the $k$ partitions has $k-1$ edges and each of the $k$ second vertices buy $k-1$ edges
     \begin{align*}
         \text{PoA}_{> 0}^{\leq, 0} &\geq \frac{2k(k-1)}{k^2-1} = \frac{2\sqrt{n}(\sqrt{n}- 1)}{n-1} 
         \\& = \frac{2n - \sqrt{n}}{n-1} = 2 - \mathcal{O}\left(\frac{1}{\sqrt{n}}\right)
     \end{align*}
    By $\Cref{lm:upperPoA1}$ this is tight up to $o(1)$.
\begin{figure}
    \centering
    \begin{tikzpicture}[>=stealth, node distance=1cm, font=\small,  scale=0.8]
    
    \tikzstyle{v}=[circle, draw, inner sep=2pt, minimum size=6mm]

    \node[v] (v11) at (0,0) {$v_1^1$};
    \node[v] (v12) at (0,-1.2) {$v_1^2$};
    \node[v] (v13) at (0,-2.4) {$v_1^3$};

    \node[v] (v21) at (3,0) {$v_2^1$};
    \node[v] (v22) at (3,-1.2) {$v_2^2$};
    \node[v] (v23) at (3,-2.4) {$v_2^3$};

    \node[v] (v31) at (6,0) {$v_3^1$};
    \node[v] (v32) at (6,-1.2) {$v_3^2$};
    \node[v] (v33) at (6,-2.4) {$v_3^3$};

    \draw[->, thick] (v11) -- (v12);
    \draw[->, thick] (v13) -- (v12);
    
    \draw[->, thick] (v21) -- (v22);
    \draw[->, thick] (v23) -- (v22);
    
    \draw[->, thick] (v31) -- (v32);
    \draw[->, thick] (v33) -- (v32);
    
    \draw[->, dashed] (v12) to[] (v21);
    \draw[->, dashed] (v12) to[] (v33);

    \draw[->, dashed] (v22) to[] (v31);
    \draw[->, dashed] (v22) to[] (v13);

    \draw[->, dashed] (v32) to[] (v11);
    \draw[->, dashed] (v32) to[] (v23);

    \node at (0,1) {$V_1$};
    \node at (3,1) {$V_2$};
    \node at (6,1) {$V_3$};
    \end{tikzpicture}

    \caption{NE for $k = 3$. Dashed edges are labeled 2 and solid edges are labeled 1}
    \label{fig:grid}
\end{figure}
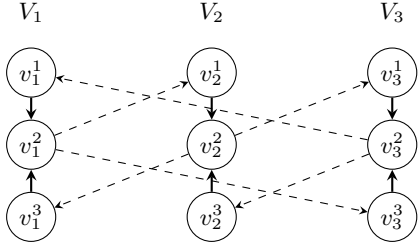
\end{proof}

\subsection{Strict Paths}

\label{sec:strict}
We now examine the same label cost function, but we only consider strictly increasing paths. We provide a PoS and a PoA bound. While the PoS is the same for non-strict and strict models, the PoA for the strict model is significantly higher.

\begin{lemma}
    The social optimum for the strict uniform label cost model is $2n - 4$, even under the constraint of proper edge labeling.
    \label{lem:socoptimum}
\end{lemma}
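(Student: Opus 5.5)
The plan is to prove matching upper and lower bounds on the number of edges. Since the optimum satisfies every agent's reachability, buys only positive labels and incurs no penalties, $C(\mathbf{s})=|E_\mathbf{s}|$ for any strategy profile that is temporally connected with strict reachability. Reachability violations cost $K\ge 2$ per pair, and deleting edges can only save at most the total number of edges, so an optimum will be temporally connected; I would justify this by the same exchange argument as in \Cref{prop:posuniform}. The statement then reduces to: the minimum number of edges in a strictly temporally connected single-label temporal graph on $n\ge 4$ vertices is $2n-4$, and the minimum is attained by a proper labeling.

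For the lower bound I would reduce to the classical gossip problem. The minimum number of calls needed so that everybody knows everything is $2n-4$ for $n\ge 4$ (Baker--Shostak; Hajnal--Milner--Szemer\'edi), and I would cite this. Let $G$ be strictly temporally connected with $m$ edges. Refine the labels to a total order by breaking ties among equal labels arbitrarily, which gives a sequence of $m$ calls. Every strictly increasing path in $G$ remains increasing in the refined order, so after the call sequence every vertex has received every other vertex's information. Hence $m\ge 2n-4$. The point to check carefully is that a strict journey never uses two edges with the same label, so tie-breaking cannot destroy a journey. This is the step where strictness is genuinely used, and I expect it to be the main subtlety.

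For the upper bound I would give an explicit proper labeling with $2n-4$ edges. Take a $4$-cycle $a,b,c,d$ and the remaining vertices $x_1,\dots,x_{n-4}$.
\begin{itemize}
\item Each $x_i$ buys the edge $\{x_i,a\}$ with label $i$.
\item The cycle uses $\{a,b\},\{c,d\}$ with label $n-3$ and $\{a,c\},\{b,d\}$ with label $n-2$.
\item Each $x_i$ buys the edge $\{x_i,b\}$ with label $n-2+i$.
\end{itemize}
This gives $2(n-4)+4=2n-4$ edges. The labels are distinct at $a$ and at $b$, and the other vertices have degree at most $2$ with distinct labels, so the labeling is proper.

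To check strict temporal connectivity, first consider the cycle. By time $n-4$, $a$ has heard from every $x_i$. The two rounds $n-3$ and $n-2$ on the $4$-cycle form the classical $4$-vertex gossip, after which all of $a,b,c,d$ know everything. Finally, $b$ forwards to each $x_j$ at time $n-2+j$. For individual journeys from $x_i$: it reaches $a$ at time $i$, $b$ at $n-3$, $c$ and $d$ at $n-2$, and each $x_j$ via $b$ at $n-2+j$; all of these labels are strictly increasing. The vertices $a,b,c,d$ reach each other and all $x_j$ in the same way. For $n=4$ the construction is just the $4$-cycle.

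Combining the two bounds gives $2n-4$, and the construction shows that this value is achievable even under proper labeling.
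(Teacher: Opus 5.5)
Your proposal is correct and takes the same route as the paper, whose proof simply invokes the $2n-4$ gossip bound. You go further than the paper in four ways: you make explicit the tie-breaking reduction (the step where strictness is used), you restrict to $n\ge 4$, you give a proper $2n-4$ construction, and you argue that an optimum is temporally connected. The paper's one-line citation leaves the construction implicit (it only appears later, in the PoS theorem) and does not address connectivity at all.

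The one step worth tightening is that last connectivity argument, where the appeal to an ``exchange argument'' is vague. Instead, add a direct edge with a fresh large label for each of the $u$ unreachable ordered pairs. This destroys no journey and keeps the labeling proper, so any profile with $m$ edges satisfies $m+u\ge 2n-4$ and therefore costs at least $m+Ku\ge 2n-4$.
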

\begin{proof}
    This is a direct implication of a gossip theory result from \cite{casteigts_sharp_2023}, that the smallest possible temporal spanner of a host graph with $n$ vertices has at least $2n - 4$ edges and the same holds for proper edge labeling.
\end{proof}
\begin{theorem}
    \hypertarget{thm:strictuniformpos}
    For strict uniform label cost $ \text{PoS}_{> 0}^{<, 0} = 1$.\label{thm:unilabpos}
\end{theorem}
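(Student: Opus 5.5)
The plan is to exhibit a Nash equilibrium whose social cost matches the lower bound of \Cref{lem:socoptimum}, namely $2n-4$ edges. Since all labels cost the same, the social cost of any profile in which every agent reaches everyone and no label is below $1$ is just the number of bought edges. So it suffices to find a strictly temporally connected graph with $2n-4$ positive-labelled edges, together with an assignment of buyers, and then check that no agent can improve. For the graph I would use the standard gossip construction achieving $2n-4$. Take four core vertices $a,b,c,d$ that form a $4$-cycle $a$--$b$--$c$--$d$--$a$ with labels $2,3,2,3$; concretely $\{a,b\}$ and $\{c,d\}$ get label $2$, and $\{b,c\}$ and $\{d,a\}$ get label $3$. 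Every other vertex $x$ is attached as a pendant to a core vertex by a $1$-edge and, a second time, by a $4$-edge. Attaching both edges to the same core vertex would create a multi-edge, so I would attach each $x$ by a $1$-edge to $a$ and by a $4$-edge to $c$. This gives $4+2(n-4)=2n-4$ edges.

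The connectivity check goes as follows. A strict path from $x$ starts with label $1$ at $a$, then uses $a$'s label-$2$ edge to $b$ and a label-$3$ edge to $c$. Vertex $d$ is reached via $a\to d$ with label $3$, or via $c$. Every other pendant $y$ is then reached from $c$ by its $4$-edge. The core vertices reach every pendant through $c$ at time $4$ after gathering: $a$ goes $a\to b\to c$ with labels $2,3$; $b$ goes $b\to c$ with label $3$; $d$ goes $d\to c$ with label $2$ and then needs the $4$-edge from $c$, which is fine; $c$ uses its $4$-edges directly. I would also verify by hand that the core vertices reach each other by strict paths. If the chosen labels do not all work out, I would shift them, for example using core labels $\{2,3\}$ with pendants on $1$ and $4$, and tune which core vertex receives the late edges.

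For the equilibrium property, I would let each pendant $x$ buy both of its edges and let core agents buy the cycle edges in some fixed orientation, each core agent buying at most one or two of them. A pendant $x$ has degree $2$ in $G$, so other agents rely on its edges only to reach $x$ itself. The key claim is that any agent buying $k$ edges could not do with $k-1$. For a pendant, a single edge $\{x,w\}$ with label $\ell$ must both carry departures from $x$ early and arrivals into $x$ late. With one edge, $x$ must leave and enter through the same neighbour $w$, so $w$ needs strict journeys to and from all vertices around one time $\ell$. That requires $w$ to be a ``hub'' in the residual graph, which I would rule out for every choice of $w$. The main obstacle is exactly this case analysis, especially making sure no single well-labelled edge to $c$ or $a$ suffices; the choice of the two late and early attachment points is designed so that it does not. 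For core agents, dropping a cycle edge should disconnect some core pair, and replacing two bought edges by one should again fail by the same hub argument. Having an NE with cost $2n-4$ and invoking \Cref{lem:socoptimum} then gives $\text{PoS}_{>0}^{<,0}=1$.
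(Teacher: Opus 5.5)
\textbf{There is a genuine gap in the equilibrium part.} Your graph is fine: it is strictly temporally connected with $2n-4$ positive labels. The problem is who buys the late edges. In this model agent $u$ only pays for the vertices it cannot reach, $K\cdot|V\setminus R_{G(\mathbf{s})}(u)|$; whether others can reach $u$ does not enter $u$'s cost. Your requirement that a pendant's single edge must also ``carry arrivals into $x$ late'' therefore misreads the objective.

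In fact each pendant $x$ can simply delete its $4$-edge to $c$ and still reach everyone through its $1$-edge:
\begin{itemize}
\item $x\to a$ (label $1$), then $a\to b$ ($2$) and $a\to d$ ($3$);
\item $b\to c$ ($3$);
\item $c\to y$ ($4$) for every other pendant $y$.
\end{itemize}
So $x$ lowers its cost from $2$ to $1$ and your profile is not a Nash equilibrium. The ``hub'' case analysis you single out as the main obstacle is aimed at the wrong deviation.

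\textbf{The missing idea.} The paper's construction uses it: each late edge must be bought by an agent that needs it to reach the pendant, namely the core vertex at its late end. In the paper, $v_a$ and $v_c$ buy the late edges. In your graph this means:
\begin{itemize}
\item each pendant buys only its $1$-edge to $a$;
\item $c$ buys all $4$-edges;
\item the cycle is oriented so that every core vertex buys exactly one cycle edge.
\end{itemize}
The last point is needed because a core vertex owning two cycle edges could replace both by a single $1$-edge to the opposite core vertex, e.g.\ $a$ by $\{a,c\}$ with label $1$.

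With this assignment the check is short.
\begin{itemize}
\item A pendant with no edge of its own reaches only $c$.
\item $c$ must buy a direct edge to every pendant $y$. The only other edge at $y$ has label $1$, and no strict journey with positive labels reaches $a$ before time $1$; a label $\le 0$ costs $K\ge|V|$. With only those $n-4$ edges, $c$ misses a core vertex, so it needs its cycle edge too.
\item $a$, $b$ and $d$ each miss some core vertex without their cycle edge.
\end{itemize}

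Keep all early labels equal to $1$. With staggered early labels, $c$ could replace two late edges by one $1$-edge to $a$ as soon as two pendants have early label at least $2$. The same shortcut ($v_c$ buying a $1$-edge to $v_a$) affects the staggered labels $\tfrac{k}{2}$ in the paper's general formula once $n\ge 10$. Your choice of a uniform early label is actually the safer one.
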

\begin{proof}
    We can construct a NE $\mathbf{s}$ for all $n \geq 4$. Let $v_a, v_b, v_c, v_d \in V_\mathbf{s}$. They each buy one edge in counterclockwise direction with $\lambda_\mathbf{s}(\{v_a, v_b\}) = \lambda_\mathbf{s}(\{v_c, v_d\}) = n$ and $\lambda_\mathbf{s}(\{v_b, v_c\}) = \lambda(\{v_d, v_a\}) = n+1$ to form an outer ring. Now let $V_e = V_H \setminus \{v_a, v_b, v_c, v_d\}$ be the set of the other vertices. Let $v_e^k \in V_e$ be t $k$th vertex. If $k \mod 2 \equiv 0$, $v_e^k$ buys a $\frac{k}{2}$-edge towards $v_a$ and $v_c$ buys a $(n+1+\frac{k+2}{2})$-edge towards $v_e^k$. Otherwise $v_e^k$ buys a $\left\lfloor\frac{k}{2}\right\rfloor$-edge towards $v_c$ and $v_a$ buys a $(n+1+\frac{k+1}{2})$-edge towards $v_e^k$ (see \Cref{fig:poastrict}).
    
    $v_a$ and $v_b$ can reach each other via their 2-edge $\{v_a, v_b\}$ and then reach $v_c$ and $v_d$ via the 3-edges $\{v_b, v_c\}$ and $\{v_a, v_d\}$ respectively. The same applies to $v_c$ and $v_d$ by symmetry, so $v_a, v_b, v_c, v_d$ can all reach each other. $v_a$ and $v_c$ buy edges towards the vertices in $V_e$ with label greater than $n$, meaning the vertices in the outer ring can reach all of them via $v_a$ and $v_c$ respectively. All $v_e \in V_e$ buy an edge with label less than $n$ as $k < n$, meaning they reach every other vertex via the outer ring. This proves that $G(\mathbf{s})$ is temporally connected.
    
    Next, we prove it is a NE. $v_a$ and $v_c$ need the edges bought towards the vertices in $V_e$ as they otherwise cannot reach them. They cannot buy fewer edges, since each $v_e^k$ they buy an edge towards only has only one other edge with label $\left\lfloor\frac{k}{2}\right\rfloor$. By our construction $v_a$ and $v_c$ have edges bought towards them with labels $1$ to $\left\lfloor\frac{k}{2}\right\rfloor$. Thus, they cannot buy edges with that label and therefore cannot buy an edge that reaches 2 vertices from $V_e$ that they currently buy an edge towards.
    
    Every $v_e \in V_e$ cannot reach $v_a$ without buying their 1-edge as they otherwise have only incident edges with label $>n+1$ from either $v_a$ or $v_c$ and $v_a$ and $v_c$ have incident edges with labels $\leq n+1$. Lastly, $v_a, v_b, v_c$ and $v_d$ all need the edges they buy in the outer ring to reach each other. Thus, it is a NE. Since the outer ring consists of 4 edges and every $v_e \in V_e$ has 2 incident edges, with $|V_e| = |V_H| - |\{v_a, v_b, v_c, v_d\}| = n-4$, we have
    \begin{equation*}
        \text{PoS}_{> 0}^{<, 0} = \frac{4 + 2(n-4)}{2n-4} = \frac{2n-4}{2n-4} = 1.\qedhere
    \end{equation*}
    \begin{figure}
        \centering
        \begin{tikzpicture}[->, font=\small, scale=0.7]
            \node[circle, draw] (BL) at (0,0) {$v_a$};
            \node[circle, draw] (BR) at (4,0) {$v_b$};
            \node[circle, draw] (TR) at (4,4) {$v_c$};
            \node[circle, draw] (TL) at (0,4) {$v_d$};
        
            \node[circle, draw] (I1) at (1.5,2.5) {$v_{e}^1$};
            \node[circle, draw] (I2) at (2.5,1.5) {$v_{e}^2$};
        
            \draw (BL) -- node[below] {6} (BR);
            \draw (BR) -- node[right] {7} (TR);
            \draw (TR) -- node[above] {6} (TL);
            \draw (TL) -- node[left] {7} (BL);
        
            \draw (I1) -- node[sloped, above] {1} (BL);
            \draw[<-] (I2) -- node[sloped, above] {8} (BL);
        
            \draw (TR) -- node[sloped, above] {8} (I1);
            \draw[<-] (TR) -- node[sloped, above] {1} (I2);
        \end{tikzpicture}
        \caption{PoA in the any-label model for strict paths with $n = 6$}
        \label{fig:poastrict}
    \end{figure}
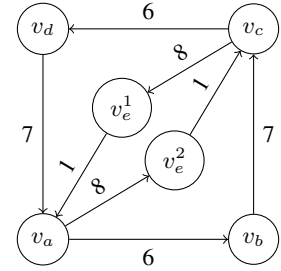
\end{proof}
\begin{theorem}
    \hypertarget{thm:strictuniformpoa}
    In the strict uniform label cost $\text{PoA}_{> 0}^{<, 0} \in \Theta(n)$.\label{thm:anystrictPoA}
\end{theorem}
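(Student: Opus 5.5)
The plan is to prove the two directions separately: an $\mathcal{O}(n)$ upper bound valid for every NE, and an $\Omega(n)$ lower bound from one explicit NE. Existence of NE in this variant is already given by the construction in \Cref{thm:unilabpos}.

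\emph{Upper bound.} The denominator is fixed by \Cref{lem:socoptimum}: $C(\text{OPT}) = 2n-4$. For the numerator, I would bound the number of edges any single agent buys in a NE. An agent $u$ can always deviate to the strategy $S'_u=\{(w,1)\mid w\neq u\}$. This buys a direct $1$-edge to every other agent. A single edge is a strict temporal path by itself, and all labels are $\geq 1$, so there is no penalty. Hence $u$ reaches every vertex at cost exactly $n-1$. So in a NE, $c(u,\mathbf{s})\leq n-1$ for every agent $u$. Summing over all $n$ agents gives $C(\mathbf{s})\leq n(n-1)$, and therefore
\begin{equation*}
\text{PoA}_{> 0}^{<, 0}\leq \frac{n(n-1)}{2n-4}\in\mathcal{O}(n).
\end{equation*}

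\emph{Lower bound.} I would use the complete graph with every edge labelled $1$. Fix any orientation of $K_n$ and let the tail of each edge buy it with label $1$. The graph is temporally connected, since every pair is joined by a direct edge. I then check that this is a NE. Fix an agent $v$ and an arbitrary deviation $S'_v$. Every edge that $v$ does not buy still has label $1$. Every label $v$ uses is at least $1$, since otherwise $v$ pays the penalty $K\geq |V|$, which already exceeds any possible saving. Consider a strict temporal path starting at $v$. Its first edge has some label $a\geq 1$. Its second edge is incident to some $x\neq v$ and does not return to $v$, so it is not bought by $v$. That edge therefore has label $1\leq a$, which violates strictness. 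Since reachability is symmetric in $v$ only through the direct edge, under any deviation $v$ reaches exactly its direct neighbours. If $v$ drops an edge $\{v,w\}$ that it buys, then no one else buys that edge, so $v$ loses $w$ and pays $K\geq 2>1$ extra. Adding edges cannot help either. So no deviation is improving and $\mathbf{s}$ is a NE. Its social cost is $\binom{n}{2}$, which gives
\begin{equation*}
\text{PoA}_{> 0}^{<, 0}\geq \frac{n(n-1)/2}{2n-4}\in\Omega(n).
\end{equation*}

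\emph{Main obstacle.} The delicate step is the NE verification. I must rule out that $v$ relabels its own edges, for example mixing labels $1$ and $2$, to create two-hop strict paths. The argument above handles this because every continuation edge belongs to another agent and carries the minimal allowed label $1$. One point to confirm is that the minimum-label rule in $\lambda_{G(\mathbf{s})}$ cannot help $v$. It cannot: $v$ buying a second copy of an existing edge can only keep or lower its label, and the label cannot go below $1$ without incurring the penalty.
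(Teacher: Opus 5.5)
Your proposal is correct and essentially matches the paper's proof. The lower bound uses the same all-$1$-labelled clique as the NE, measured against the $2n-4$ optimum, and your argument that every strict path from a deviator has length one (because every continuation edge carries the minimum admissible label $1$) is the paper's argument in a slightly more general form. For the upper bound, the paper simply asserts that no NE has more edges than the clique, whereas your deviation to buying all $n-1$ direct $1$-edges bounds every agent's cost by $n-1$; this is a factor $2$ weaker but fully justified, and it suffices for $\Theta(n)$.
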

\begin{proof}
    Consider a strategy graph $G(\mathbf{s})$ that is a clique with $\lambda_{\mathbf{s}}(e) = 1$. Assume a vertex $v$ with a strategy $S_V$ could change to a strategy ${S_v'}$ with lower cost, resulting in a new strategy profile $\mathbf{s}' = \mathbf{s}_{-v} \cup S_v'$ with $|E(\mathbf{s}')| < E(\mathbf{s})$. Then there has to be at least one edge $(v, v') \in G(\mathbf{s})$ that is not in $G(\mathbf{s}')$. Further for all vertices $v''\neq v\colon \lambda_{\textbf{s}'}((v', v''\}) = \lambda_{\textbf{s}}((v', v''\}) = 1$. As there is no direct edge between $v$ and $v'$ in $G(\textbf{s}')$, any path from $v$ to $v'$ would use another edge. For a path from $v$ to $v'$ to exist, there would have to be an edge $e$ incident to $v$ with $\lambda_{\textbf{s}'}(e) < \lambda_{\textbf{s}'}((v', v''\}) = 1$ for any vertex $v''$. However, $\lambda_{\textbf{s}'}$ maps to values at least 1. It follows that there is no strictly increasing path from $v$ to $v'$ and $B'_v$ has higher costs than $S_v$, which is a contradiction to the assumption.
    
    Thus, $\textbf{s}$ is a NE with $\frac{n(n-1)}{2}$ edges. Trivially, there are no NEs with more edges. The social optimum is $2n - 4$, making
    \begin{equation*}
        \text{PoA}_{> 0}^{<, 0} = \frac{n(n-1)}{2(2n-4)} = \frac{n(n-1)}{4(n-2)} \in \Theta(n).\qedhere
    \end{equation*}
\end{proof}

\section{Monotone Label Cost Model}
\label{sec:monotone}
In this section, we consider two different cost functions that give opposite incentives to the agents. In particular, for the cost function $f^\downarrow_\mathbf{s}$, the cost of each label monotonically decreases with its value, while for the cost function $f^\uparrow_\mathbf{s}$, the cost of each label monotonically increases.
\subsection{Non-strict Paths}

\begin{proposition}
     For both $f^\downarrow_\mathbf{s}$ and $f^\uparrow_\mathbf{s}$ as the label cost function, the social optimum is $n-1$.
    \label{prop:labelcostopt}
\end{proposition}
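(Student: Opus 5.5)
We prove the statement by exhibiting a profile of social cost exactly $n-1$ and showing that no profile can do better. The setting is non-strict reachability with the penalty $P^{>0}$ in force; this is the setting of the surrounding subsection and matches how Proposition~\ref{prop:posuniform} is stated.

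\emph{Upper bound.} We reuse the construction from Proposition~\ref{prop:posuniform}. Fix a center $v$ and let every other vertex buy a $1$-edge towards $v$. The result is a $1$-label star, which is temporally connected under non-strict reachability because all edges carry the same label. There are no penalties, since every label is $1 \geq 1$. For the label costs, note that all edges have the same label, so for every edge $e$ the sets $\{e' : \lambda(e') > \lambda(e)\}$ and $\{e' : \lambda(e') < \lambda(e)\}$ are both empty. Hence $f^\uparrow_\mathbf{s}(e) = f^\downarrow_\mathbf{s}(e) = 0$ for every $e$. The social cost is therefore exactly the number of bought edges, namely $n-1$.

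\emph{Lower bound.} Let $\mathbf{s}$ be any strategy profile. All label costs and penalties are nonnegative, so $C(\mathbf{s})$ is at least the total number of bought edges plus the reachability penalties. Suppose $G(\mathbf{s})$ is temporally connected. Then its underlying static graph is connected, so it has at least $n-1$ edges, and each edge is bought by at least one agent. Hence $\sum_u |S_u| \geq n-1$.

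Suppose instead $G(\mathbf{s})$ is not temporally connected. Let $U$ be the set of agents with an unreachable vertex, and let $m = \sum_{u \in U} |V \setminus R(u)|$ be the total number of unreachable pairs. I would argue that adding at most $m$ edges (some of them relabeled to $1$) restores temporal connectivity. Since $K \geq 2$, each unreachable pair costs more than one edge, so $C(\mathbf{s})$ still exceeds $n-1$. A cruder bound also suffices: if the underlying graph has $c$ components, then there are at least $c-1$ vertices each missing at least one other vertex, and the edge count is at least $n-c$, giving $C(\mathbf{s}) \geq (n-c) + K(c-1) \geq n-1$.

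The main obstacle is a disconnected-in-time profile whose underlying static graph is connected (so it already has at least $n-1$ edges). This case is in fact immediate: the edge count alone gives $C(\mathbf{s}) \geq n-1$. The component argument is therefore needed only when the underlying graph is disconnected, and there it goes through as above. Combining both directions, the social optimum equals $n-1$ for both label cost functions.
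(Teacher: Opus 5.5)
Your proof is correct and uses the same idea as the paper: a uniform-label tree (your $1$-label star is the paper's $k$-label tree with $k=1$) has zero label cost under both $f^\uparrow_\mathbf{s}$ and $f^\downarrow_\mathbf{s}$, so its social cost is exactly $n-1$, and nothing cheaper exists because a connected graph needs $n-1$ edges. Your explicit lower bound (static connectivity already forces $n-1$ edges; otherwise $c\geq 2$ components give $C(\mathbf{s})\geq (n-c)+K(c-1)\geq n-1$) and your choice of label $1$ to avoid the penalty $P^{>0}_\mathbf{s}$ make rigorous what the paper's one-line proof leaves implicit; the paper's proof even concludes ``NE'' rather than optimality.
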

\begin{proof}
    Consider a strategy profile $\mathbf{s}$ where $G(\mathbf{s})$ is a $k$-label tree. Since all labels are equal, the label cost is 0 and since it is a tree, no vertex can buy fewer edges and $\mathbf{s}$ is a NE.
\end{proof}
\begin{theorem}
    \hypertarget{thm:nonstrictmonotonepos}
    For $f^\downarrow_\mathbf{s}$ as the label cost function, $\text{PoS}_{> 0}^{\leq, \downarrow} = \text{PoA}_{> 0}^{\leq, \downarrow} = 1$.
    \label{thm:poaposfdown}
\end{theorem}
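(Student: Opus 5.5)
The plan is to show that every Nash equilibrium $\mathbf{s}$ has social cost exactly $n-1$. Combined with Proposition~\ref{prop:labelcostopt}, which gives $C(\text{OPT}) = n-1$ via a $k$-label tree with zero label cost, this yields $\text{PoA}_{> 0}^{\leq, \downarrow} = 1$. Since $1 \leq \text{PoS} \leq \text{PoA}$, we also get $\text{PoS}_{> 0}^{\leq, \downarrow} = 1$, provided an equilibrium exists. Existence follows from the $1$-label star of Proposition~\ref{prop:posuniform}. In that star every label is $1$, so every $f^\downarrow_\mathbf{s}$-value is $0$ and there is no penalty. No agent can drop its single edge without losing reachability, and no agent can lower its cost by rebuying, since it already pays the minimum possible cost $1$.

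For the main part, fix a NE $\mathbf{s}$. Every agent must reach all of $V$, since $K$ is large. Let $m \geq 1$ be the minimum label in $G(\mathbf{s})$; $m \geq 1$ because of $\text{P}^{>0}$. The key observation is specific to non-strict paths with $f^\downarrow$. Suppose an agent $u$ buys a single $m$-edge to a vertex $w$ whose reachability does not depend on $u$'s own edges. Then $u$ reaches $w$ at time $m$. Every temporal path out of $w$ uses only labels $\geq m$, so it can be appended, and $u$ reaches $R(w)$. Moreover, $f^\downarrow_\mathbf{s}$ of an $m$-edge is $0$, because no edge has a smaller label. So this deviation costs exactly $1$.

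From this observation I would derive two facts.
\begin{itemize}
\item Every agent buys at most one edge. Otherwise the deviation above strictly improves its cost from at least $2$ to $1$.
\item Every bought edge has label exactly $m$. An agent buying one edge of label $l > m$ pays $1 + f^\downarrow > 1$ (at least one $m$-edge lies below it), whereas the one-edge deviation costs exactly $1$.
\end{itemize}
Thus $G(\mathbf{s})$ is an $m$-label graph in which each agent owns at most one edge. Temporal connectivity forces connectivity of the underlying graph, hence at least $n-1$ edges. I would then argue that an $m$-label graph containing a cycle is not stable: some agent owning a cycle edge can drop it and stay connected, because with all labels equal, non-strict reachability is just ordinary connectivity. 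Hence $G(\mathbf{s})$ is a spanning tree with all labels $m$ and zero label cost, so $C(\mathbf{s}) = n-1$.

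The main obstacle is justifying the one-edge deviation: we need a target $w$ that still reaches all of $V$ after $u$ withdraws its own edges. A path from $w$ might pass through $u$ and then leave $u$ via an edge that $u$ bought, and that edge disappears in the deviation. My first attempt would be to take $w$ as a neighbor of $u$ along an edge $u$ did not buy, or as the endpoint of one of $u$'s edges that $u$ keeps, relabeled to $m$. I would then check that reachability from $w$ re-enters $u$ only at times $\geq m$, so that $u$'s retained edge can still be used. Failing that, I would use a counting argument instead of a pointwise one. Removing all edges of $u$ splits the graph into components. Choose $w$ in the component containing an $m$-edge, and use the fact that all other components must attach to $u$ through edges $u$ owns. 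This gives $u$ a best response with at most as many edges as components, all relabeled to $m$. That suffices for the label-uniformity step, and the tree argument then bounds the total number of edges by $n-1$.
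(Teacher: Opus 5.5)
Your overall plan (first show all labels in a NE equal the minimum $m$, then rule out cycles to get a spanning tree of cost $n-1$, with existence via the $1$-label star) matches the paper's structure. Your existence and tree steps are fine. The gap is in how you obtain label uniformity.

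Your first bullet, that every agent in a NE buys at most one edge, is false. Take the path $a$--$b$--$c$ with both edges labeled $1$ and bought by $b$. Nobody can improve: it is a $1$-label tree, exactly the kind of equilibrium your own conclusion describes. Yet $b$ buys two edges, and once $b$ withdraws them, no single edge reaches both $a$ and $c$. Whenever $u$'s purchases are bridges, the universal target $w$ you need does not exist. So your second bullet only covers agents buying exactly one edge, and multi-edge buyers are left unhandled.

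The fallback does not repair this. Before uniformity is established, labels may differ, and then a component need not be temporally reachable from any single entry vertex. For example, a path with labels $3,1,2,1$ has no vertex reaching all others, even with non-strict paths. So ``one edge per component'' is unjustified, and choosing $w$ in the component containing an $m$-edge guarantees nothing.

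The missing idea is the paper's deviation: keep the same edges and only lower their labels. The paper lowers a bought label $l$ to a smaller label $l'$ occurring in $G(\mathbf{s})$. To be safe, lower \emph{all} of $u$'s purchases to the global minimum $m$ at once, since lowering a single label can increase $f^\downarrow_\mathbf{s}$ of $u$'s other edges. This deviation works for the following reasons.
\begin{itemize}
\item Any temporal path from $u$ can be cut at its last visit to $u$, so it uses an edge at $u$ only as its first edge.
\item Decreasing that first label to $m$ keeps the path non-decreasing, because all later labels are at least the old one.
\item Hence $u$'s reachability, $|S_u|$ and $|E_\mathbf{s}|$ are unchanged, and each of $u$'s edges now has $f^\downarrow_\mathbf{s}=0$.
\item So $u$ strictly gains if one of its edges had label larger than $m$. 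A purchase of an edge that already has label $m$ through the other endpoint can simply be dropped.
\end{itemize}
With this replacing your two bullets, your cycle argument completes the proof.
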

\begin{proof}
    Let $\mathbf{s}$ be a NE. Assume there is a vertex $v$ that buys an edge $\{v, u\}$ with label $l$ and there is an edge with label $l'<l$ in $G(\mathbf{s})$. Vertex $v$ can reduce $l$ to $l'$ and still use every edge incident to $u$ with labels $l'' \geq l > l'$. Thus, it can improve its cost, and in a NE, every vertex has to buy an edge of the same label. Since a NE has to be a minimal connected graph $G(\mathbf{s})$, it has to be a $k$-label tree, and by \Cref{prop:labelcostopt}, both the PoA and PoS are 1.
\end{proof}
\begin{proposition}
    \label{nonstrictposup}
    For $f^\uparrow_\mathbf{s}$ as label cost function, $\text{PoS}_{> 0}^{\leq, \uparrow} = 1$.
\end{proposition}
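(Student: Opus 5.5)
The plan is to exhibit a single Nash equilibrium whose social cost equals the optimum identified in \Cref{prop:labelcostopt}, namely $n-1$; the PoS is then $1$. The natural candidate is the same structure used in \Cref{prop:posuniform}: fix a center $v$ and let every other vertex buy a $1$-edge towards $v$. The result is a $1$-label star $G(\mathbf{s})$, which is a $1$-label tree.

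First, I verify that the social cost is exactly $n-1$. The star is a tree with all labels equal, so under non-strict reachability every vertex reaches every other via a path with constant labels. Hence no reachability penalty is incurred. All labels are $1\geq 1$, so $\text{P}^{>0}_\mathbf{s}$ contributes nothing. Since no edge has a strictly larger label than any other, $f^\uparrow_\mathbf{s}(e)=0$ for every $e$. Each leaf therefore pays exactly $1$ and the center pays $0$, for a total of $n-1$.

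Second, I show that no agent can profitably deviate. The center buys nothing, and any purchase would cost at least $1$ with no gain, so it has no improving move. Consider a leaf $u$. Every alternative strategy falls into one of three cases.
\begin{itemize}
\item[(i)] If $u$ buys no edge, then $u$ is isolated, since nobody buys towards $u$. It then pays at least $K(n-1)>1$.
\item[(ii)] If $u$ buys exactly one edge, its cost is at least $1$, which is its current cost. All label costs are nonnegative and the penalty is nonnegative, so this is not a strict improvement.
\item[(iii)] If $u$ buys two or more edges, its cost is at least $2>1$.
\end{itemize}
A deviation to a nonpositive label only adds the penalty $\text{P}^{>0}$, so it never helps. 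Thus $u$'s current cost of exactly $1$ is minimal and $\mathbf{s}$ is a NE.

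The only subtle point is ensuring that $f^\uparrow_\mathbf{s}$ cannot become negative or otherwise reward a deviation. This is immediate from its definition as a nonnegative ratio, and it equals $0$ on a uniformly labeled graph. Combined with \Cref{prop:labelcostopt}, the equilibrium attains the optimum, giving $\text{PoS}_{> 0}^{\leq, \uparrow}=1$.
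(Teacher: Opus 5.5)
Your proof is correct and takes essentially the paper's approach. The paper defers to the argument of \Cref{thm:poaposfdown}, and the part of it that matters for the PoS is \Cref{prop:labelcostopt}: a uniformly labeled tree has zero label cost, is an equilibrium, and costs $n-1$. That is exactly what you verify for the $1$-label star. Your explicit deviation check is actually cleaner, because the other half of that referenced argument (that every equilibrium has all labels equal) does not carry over to $f^\uparrow_\mathbf{s}$, and the PoS does not need it.
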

\begin{proof}
    This follows directly from the same argument as the proof for \Cref{thm:poaposfdown}.
\end{proof}

\begin{theorem}
    \label{monotonepoaup}
    For $f_\mathbf{s}^\uparrow$ label cost $\text{PoA}_{> 0}^{\leq, \uparrow} \in \left[2 - \mathcal{O}\left(\frac{1}{\sqrt{n}}\right), 3\right]$.
\end{theorem}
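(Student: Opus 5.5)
We prove the two bounds separately. Throughout, the social optimum is $n-1$ by \Cref{prop:labelcostopt}, so it suffices to bound the social cost of the NE. Both labelling arguments below use one observation from \Cref{thm:poaposfdown}: raising the label of an edge $\{v,u\}$ to a value $l''$ keeps every temporal path through $u$ that leaves $u$ at a label $\geq l''$. Under $f^\uparrow_\mathbf{s}$ this weakly lowers $v$'s label cost.

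\textbf{Lower bound.} We reuse the construction from \Cref{thm:unilabel2}, with the labels adapted so that no agent can profit through the label cost. The idea is to give the bought edges the highest labels that still keep the graph temporally connected. The 2-edges cannot be relabelled to 1, since reachability from the $v_i^2$ would then fail in the non-strict model. The 1-edges lie below the 2-edges, so their buyers pay label cost $\frac{k(k-1)}{|E|}\cdot\frac1n<\frac1n$. Deviating to a different structure saves label cost only if it also keeps the edge count and full reachability. In \Cref{thm:unilabel2} we argued that every bought edge is necessary and that no single edge can replace two. We must additionally check that no agent can keep the same number of edges while switching to a more expensive-for-others but cheaper-for-itself label. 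In particular, a 1-edge buyer must not be able to relabel its edge to 2 and still reach everything. This is the main obstacle. If relabelling to 2 breaks reachability, e.g.\ $v_i^1$ would then fail to reach $v_i^3$ via $v_i^2$ through the 1-edge $\{v_i^3,v_i^2\}$, then the 1-edge buyers are stuck and the same counting $\frac{2k(k-1)}{k^2-1}$ gives the bound. If this check fails for the leaves of the path $v_i^1,\dots,v_i^k$, we would instead make the intra-partition structure a path with labels arranged so that each 1-edge is needed at label 1. Failing that, we would shift the intra-partition labels above the inter-partition ones and re-verify connectivity. Given this, the bound follows since $\frac{2k(k-1)}{k^2-1}=2-\mathcal O(1/\sqrt n)$.

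\textbf{Upper bound.} We follow \Cref{lm:upperPoA1}. Fix a NE $\mathbf{s}$ and a vertex $v$, and take the reachability tree $\mathcal{T}_{G(\mathbf{s})}(v)$, which has $n-1$ edges. In the uniform case, an agent buying two non-tree edges could replace them by a single 1-edge to $v$. Here that swap changes the label cost, but the saving is one full edge, which is worth more than any label cost difference because $f^\uparrow_\mathbf{s}<\frac1n<1$. So the swap is still strictly improving, and each agent buys at most one edge outside the tree. This gives $|E_\mathbf{s}|\le 2(n-1)$.

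The label cost per agent is less than $|S_u|\cdot\frac1n$, so the total label cost is at most $|E_\mathbf{s}|/n<2$. The social cost is therefore at most $2(n-1)+2\le 3(n-1)$ for $n\ge 3$, which gives $\text{PoA}\le 3$.

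One subtlety is that the label-1 edge to $v$ might be undercut by an existing smaller label at $v$. We handle this by taking the deviating label to be $\min$ of all labels present (or 1). Reachability from $v$ along non-strict paths starting at labels $\geq$ that value is then preserved. This holds because $R_{G,1}(v)=R_G(v)$ when all labels are at least 1, which the penalty $P^{>0}$ forces.
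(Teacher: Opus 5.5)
\textbf{Upper bound.} Your upper bound follows the paper's route (Lemma~\ref{lm:upperPoA1} plus a per-edge label cost below $\frac1n$). Your accounting of the total label cost as $<|E_\mathbf{s}|/n<2$ is actually sharper than the paper's bound of $n$, which taken literally gives $3n/(n-1)$ rather than $3$.

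\textbf{Lower bound: the gap.} Reusing the construction of Theorem~\ref{thm:unilabel2} is what the paper does as well. However, the only step that is new for $f^\uparrow_\mathbf{s}$ is left conditional: you need that no buyer of a $1$-edge (label cost $\frac{k(k-1)}{2k(k-1)}\cdot\frac1n=\frac1{2n}$) can lower its label cost while keeping one edge and full reachability. You write ``If this check fails \dots\ we would instead \dots\ Failing that \dots'', and the fallback constructions are unspecified and would each need their own equilibrium analysis. Your check is also too narrow. Besides relabelling its own edge to $2$ in place, $v_i^j$ may drop its $1$-edge and buy a single edge of label $\ge 2$ towards \emph{any} vertex. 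A one-edge strategy lowers the label cost only if that edge ends up with label $\ge 2$, so all such moves must be excluded; this is what the paper's remark about buying ``a single edge of label 2 or higher'' asserts. Separately, relabelling the $2$-edges to $1$ would not break $v_i^2$'s reachability. The correct reason those buyers are stable is that $2$ is the top label, so their label cost is already $0$.

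\textbf{The missing check.} It is a short counting argument. Let the star of $v_{i'}^2$ be $v_{i'}^2$ together with its $k-1$ targets. Every non-second vertex is targeted exactly once, so the $k$ stars partition $V$ into blocks of size $k$, and the $2$-edges are exactly the star edges. Suppose $v_i^j$ ($j\ne 2$) replaces its $1$-edge by one edge $e$ of label $\ge 2$. The only label-$1$ edges left at $v_i^j$ are those bought towards it. So by time $1$ it reaches only $A=\{v_i^j,\dots,v_i^k\}$ if $j\ge 3$, respectively $A=\{v_i^1\}$ if $j=1$. From then on only edges of label $\ge 2$ are usable, namely star edges and $e$, and $e$ is the only such edge joining two stars. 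Hence every vertex $v_i^j$ reaches lies in at most $|A|+1$ stars. That is $k-j+2\le k-1$ stars for $j\ge 3$, and $2\le k-1$ for $j=1$ (as $k\ge 3$). So at least $k$ vertices are missed at cost $K$ each, and the deviation is not improving. Combined with the edge-minimality argument of Theorem~\ref{thm:unilabel2}, the construction is a NE for $f^\uparrow_\mathbf{s}$ with social cost at least $2k(k-1)$. The ratio $\frac{2k(k-1)}{k^2-1}=\frac{2k}{k+1}=2-\mathcal O(1/\sqrt n)$ follows.
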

\begin{proof}
Consider the NE construction from \Cref{thm:unilabel2}. The vertices buying 2-edges already do not have additional label cost. Meanwhile, the vertices buying 1-edges could not instead buy a single edge of label 2 or higher, as the size of 2-label trees in the graph is only $k-1$. Thus, since we have shown that no vertex can buy fewer edges and no vertex can improve their label cost, the construction is also a NE in the label cost model with $f_\mathbf{s}^\uparrow$.

Note also that since agents still prioritize reducing the number of edges bought, by \Cref{lm:upperPoA1}, the maximum number of edges is bounded upwards by $2n$. Further, the total label cost is at most 1 per vertex, or at most $n$ in total. Thus,

\begin{equation*}
    \text{PoA}_{> 0}^{\leq, \uparrow} \in \left[2 - \mathcal{O}\left(\frac{1}{\sqrt{n}}\right), 3\right].\qedhere
\end{equation*}
\end{proof}

\subsection{Strict Paths}
\begin{proposition}
    \label{strictmonotonepoa}
    With strict paths it holds for both cost functions $\text{PoA}_{> 0}^{<, \downarrow}\in \Theta(n)$, $\text{PoA}_{> 0}^{<, \uparrow}\in \Theta(n)$
\end{proposition}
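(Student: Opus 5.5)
We prove the upper and lower bounds separately; the lower bound reuses the construction of \Cref{thm:anystrictPoA}.

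\textbf{Upper bound.} Any strategy profile $\mathbf{s}$ yields a simple graph, so $G(\mathbf{s})$ has at most $\binom{n}{2}$ edges. In a NE every vertex reaches all others, since otherwise it pays at least $K$, which exceeds the cost of buying a single edge to the root of a reachability tree. The labels cost less than $\frac{1}{n}$ per edge, so the total label cost is below $1$. Hence $C(\mathbf{s}) \le \binom{n}{2} + 1$.

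For the optimum, the label cost is nonnegative and a strict temporal spanner needs at least $2n-4$ edges by \Cref{lem:socoptimum}, so $C(\text{OPT}) \ge 2n-4$. This gives $\text{PoA} \in \mathcal{O}(n)$ for both $f^\downarrow_\mathbf{s}$ and $f^\uparrow_\mathbf{s}$.

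\textbf{Lower bound.} Take the all-$1$ clique from \Cref{thm:anystrictPoA}. All labels are equal, so the label cost under both $f^\downarrow_\mathbf{s}$ and $f^\uparrow_\mathbf{s}$ is $0$, and no agent can lower its label cost. It remains to show that no agent can lower its edge count. Suppose $v$ deviates and drops some edge $\{v,v'\}$, with $v'$ not buying that edge back. Every other edge at $v'$ still has label $1$. A strict path from $v$ to $v'$ would have to end with such an edge and hence needs an earlier edge with label below $1$. Such a label costs $v$ a penalty of $K \ge |V|$, which is more than it saves by dropping edges. So $v$ cannot remove even one edge, and the clique is a NE with $\binom{n}{2}$ edges.

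One subtlety is that the deviation changes $|E_{G(\mathbf{s})}|$, which enters the normalisation of $f$. This does not matter, because label costs are always less than $\frac{1}{n}$ per edge and edge count dominates. Combined with the upper bound on $C(\text{OPT})$ from the $2n-4$ construction of \Cref{thm:unilabpos}, whose label cost is below $1$, we get $\text{PoA} \ge \frac{n(n-1)/2}{2n-3} \in \Omega(n)$.

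\textbf{Main obstacle.} The delicate step is the deviation argument in the lower bound. One must check that relabeling the remaining edges cannot help, since $v$ might rebuy other edges with new labels. The point is that any strict path into $v'$ must use an edge at $v'$ bought by someone other than $v$. Those edges carry label $1$, so a strict path into $v'$ cannot use them as its last edge without a preceding edge of label $<1$; any path that reaches $v'$ otherwise must use an edge to $v'$ that $v$ buys itself. Hence $v$ must still buy an edge to every vertex, so it cannot improve.
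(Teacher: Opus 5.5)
Your lower bound is the paper's argument: the all-$1$ clique from Theorem~\ref{thm:anystrictPoA} remains a NE because all labels are equal, so label costs are $0$ under both $f^\downarrow_\mathbf{s}$ and $f^\uparrow_\mathbf{s}$, and no agent can drop an edge. You also spell out the $\mathcal{O}(n)$ upper bound that the paper leaves implicit, which is sound in outline, but two details there are wrong without changing the conclusion: your per-edge bound only gives a total label cost below $\binom{n}{2}/n=(n-1)/2$, not below $1$, and under strict reachability a single edge to the root of a reachability tree need not restore reachability. The clean fix is that in a NE each agent pays at most what buying label-$1$ edges to all other agents would cost, which is less than $(n-1)(1+\frac{1}{n})<n$; this gives $C(\mathbf{s})<n^2$ and also covers duplicate purchases and penalties.
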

\begin{proof}
    The clique of equal labels is a NE since, as established in \Cref{thm:anystrictPoA}, no vertex can buy fewer edges and the label cost is 0 for all vertices. Thus, both PoA are linear.
\end{proof}
\begin{proposition}
    \label{strictdownpos}
    With strict paths, for the $f_\mathbf{s}^\downarrow$ cost function $\text{PoS}_{> 0}^{<, \downarrow}\in \Theta(n)$.
\end{proposition}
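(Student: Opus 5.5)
The plan is to show that under $f_\mathbf{s}^\downarrow$ with strict reachability \emph{every} Nash equilibrium is a complete graph with all labels equal. The clique of equal labels is a NE by \Cref{strictmonotonepoa}, so equilibria exist, and this characterization makes the PoS equal to the PoA. Comparing against the social optimum from \Cref{lem:socoptimum} then gives $\Theta(n)$.

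\textbf{Step 1 (all labels equal in a NE).} This mirrors the proof of \Cref{thm:poaposfdown}, adapted to strict paths. Let $\mathbf{s}$ be a NE and let $l'$ be the minimum label in $G(\mathbf{s})$; the penalty gives $l'\ge 1$. Suppose some agent $v$ buys $\{v,u\}$ with label $l>l'$. Consider the deviation in which $v$ buys the same edge with label $l'$ instead.
\begin{itemize}
\item \emph{Reachability of $v$ is preserved.} Take any strict temporal path starting at $v$ that uses $\{v,u\}$. We may assume it traverses the edge as $v\to u$ and as its first edge. If the path instead returns to $v$ later, we shortcut it to start directly with $\{v,u\}$, and its later labels are still $>l$. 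The remaining edges have labels $>l>l'$, so the modified path is still strictly increasing.
\item \emph{Other agents are irrelevant.} Paths used by other agents may break, but that does not enter $v$'s cost.
\item \emph{Label cost strictly drops.} Before the deviation, $f^\downarrow$ of this edge counts at least the $l'$-edge. Afterwards it counts only edges with label $<l'$, and there are none.
\end{itemize}
Since the number of edges and the penalties are unchanged, this is a strict improvement, a contradiction. A small technicality is an edge bought by both endpoints, where the label is the minimum of the two purchases. In a NE one endpoint could drop its purchase, so every edge is bought once and the argument applies.

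\textbf{Step 2 (completeness).} If all labels are equal, every strict temporal path has length one. Temporal connectivity in a NE, which is forced by the large constant $K$, therefore requires every pair of vertices to be adjacent. Hence every NE has $\binom{n}{2}$ edges and zero label cost, so its social cost is $n(n-1)/2$.

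\textbf{Step 3 (optimum and ratio).} By \Cref{lem:socoptimum} every temporally connected graph needs at least $2n-4$ edges, so $C(\text{OPT})\ge 2n-4$. For the upper bound, the spanner with $2n-4$ edges from \Cref{thm:unilabpos} has label cost below $1/n$ per edge, hence below $2$ in total, so $C(\text{OPT})\le 2n-2$. Therefore
\[
\text{PoS}_{>0}^{<,\downarrow}=\frac{n(n-1)/2}{\Theta(n)}\in\Theta(n).
\]

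The main obstacle is Step 1: checking carefully that lowering a label can never hurt $v$'s own strict reachability. The subtle case is a path that reaches $u$'s side only after first leaving $v$ along other edges, and the shortcut argument above handles it.
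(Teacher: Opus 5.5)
Your route is the same as the paper's: in a NE all labels are equal because the buyer of a larger label can lower it; equal labels with strict paths force a clique; and the clique is compared against the $2n-4$ optimum. However, the label-cost step in Step 1 is not correct as stated.

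The value $f^\downarrow_\mathbf{s}$ of each \emph{other} edge that $v$ buys depends on the label of $\{v,u\}$. Moving that label from $l$ down to $l'$ adds one to the count of every other edge of $v$ whose label lies in $(l',l]$. For instance, suppose $v$ buys three edges of label $l$ and the $l'$-edge is the only edge below $l$. Lowering one of them changes its count from $1$ to $0$, but raises each of the other two from $1$ to $2$. So $v$'s cost goes \emph{up} by $1/(|E_{G(\mathbf{s})}|\,|V|)$. Your single-edge deviation therefore need not be improving, and the contradiction does not follow. The paper's one-line argument skips the same point.

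The repair is to let $v$ lower \emph{all} of its purchased edges to $l'$ at once. Since $l'$ is the global minimum, every term of $v$'s label cost becomes $0$. Before, the term for $\{v,u\}$ was at least $1/(|E_{G(\mathbf{s})}|\,|V|)$, and the number of bought edges and the penalties are unchanged, so this is a strict improvement. Reachability survives by your shortcut idea applied to all of $v$'s edges simultaneously. For each $w\in R_{G(\mathbf{s})}(v)$, take the suffix of a strict path to $w$ after its last visit to $v$. Only its first edge touches $v$, and that edge's label does not increase. All later edges are unchanged and have labels exceeding the first edge's old label, so the suffix is still a strict path.

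Separately, your inference $C(\mathrm{OPT})\ge 2n-4$ presumes OPT is temporally connected. For $\Theta(n)$ it is simpler to note that a vertex with no incident edge pays $K(n-1)$, so $C(\mathrm{OPT})\ge n/2$ in any case. With these changes, Steps 2 and 3 go through as you wrote them.
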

\begin{proof}
    If there are at least two different labels, then whichever vertex buys the higher one can replace its label with a lower one and still reach everyone. Thus, in a NE every label is equal. Since paths are strict, each vertex can only reach a single other using an edge, and the only NE is a clique with the same labels.
\end{proof}


\section{Proper Label Model}
\label{sec:proper}
In this section, we consider a model where label costs are always the same, but agents are not allowed to assign a label value $\ell$ to an edge $e$ if edge $e$ already has an adjacent edge $e'$ with label value $\ell$. Because of the following proposition, we can study the strict setting, and the results carry over to the non-strict setting as well.

\begin{proposition}
    \label{propernonstrict}
    PoA and PoS are equal to the strict paths ones if non-strict paths are allowed in the proper label model.
\end{proposition}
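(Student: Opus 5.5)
The core observation I would build on is that in a properly labeled temporal graph, strict and non-strict reachability coincide. If $(e_1,\dots,e_i)$ is a non-strict temporal path, then consecutive edges $e_k$ and $e_{k+1}$ share a vertex, so they are adjacent. Properness therefore forces $\lambda(e_k)\neq\lambda(e_{k+1})$, and together with $\lambda(e_k)\le\lambda(e_{k+1})$ this gives $\lambda(e_k)<\lambda(e_{k+1})$. Hence every non-strict temporal path is already strict. Since every strict path is trivially non-strict, we get $R^{\le}_{G}(v)=R^{<}_{G}(v)$ for every $v$ whenever the labeling of $G$ is proper.

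The plan is to lift this from graphs to the game. I would show that for every strategy profile $\mathbf{s}$ with $P^{\neq}_\mathbf{s}(u)=0$ for all agents, the cost of each agent is identical in the strict and non-strict variants. The edge count, the label cost ($f^0$) and the penalties do not depend on the reachability notion, and the reachability term agrees by the observation above. Strictly, one should check that the penalty really certifies properness of $G(\mathbf{s})$; I would argue this from the definition of $P^{\neq}$, which penalizes an agent for buying two equally labeled edges and for buying an edge whose label equals one already bought at its endpoint.

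Next I would argue that in both variants, every NE and every social optimum has zero penalty. Penalties cost $K\ge |V|$, which exceeds any saving available from edges or reachability in the relevant comparisons. For the optimum, the construction of \Cref{lem:socoptimum} gives a penalty-free profile of cost $2n-4$ under proper labeling. For a NE, an agent incurring a penalty could drop or relabel the offending edge, for example to a fresh large label, unused elsewhere. This costs at most a bounded loss of reachability and so improves its cost.

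With that established, the sets of penalty-free profiles coincide and their costs coincide, so the NE sets and optima coincide. Deviations also carry over: a deviation that introduces a penalty is never profitable in either model, and a penalty-free deviation has the same cost in both. Therefore PoA and PoS are equal.

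I expect the main obstacle to be the claim that equilibria are penalty-free. Removing a penalized edge might cost reachability, and fresh labels can break reachability paths that relied on that edge's original label. I would first try the cruder bound that a penalty of $K\ge|V|$ already exceeds the cost of buying $n-1$ fresh edges with distinct new labels reaching everyone. That would replace the offending strategy while keeping the rest of the profile intact, and it needs the chosen labels to avoid all labels adjacent at the targets, which is possible because labels are unbounded.
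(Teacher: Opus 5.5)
Your first paragraph is exactly the paper's proof, and the paper stops there. It treats properness of $G(\mathbf{s})$ as a hard constraint of the model: agents ``are not allowed'' to create adjacent equal labels. Under that reading the feasible profiles and all agents' costs coincide in the two variants, which is your final lifting step.

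The gap is in the bridge you add in order to work with the penalty instead. You claim that $P^{\neq}_{\mathbf{s}}(u)=0$ for all $u$ forces $G(\mathbf{s})$ to be proper, but for $P^{\neq}$ as defined this is false. Its second term only inspects $S_v$, i.e.\ edges bought \emph{by} the endpoint $v$. So if two agents $a,b$ both buy $(v,l)$ and $v$ itself buys no $l$-edge, nobody is charged; your paraphrase ``one already bought at its endpoint'' overstates the term.

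Concretely, let every leaf of a star buy $(c,1)$ towards the center $c$. This profile is penalty-free and improper. With non-strict paths it is temporally connected with $n-1$ edges, and it is an NE. With strict paths each leaf reaches only $c$. Hence agents' costs do not agree on penalty-free profiles. With the literal penalty the two variants do not even share the optimum ($n-1$ versus $2n-4$ for $n\ge 4$). So your route through identical NE sets and optima cannot succeed with $P^{\neq}$ as written.

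There are two repairs. You can adopt the paper's hard-constraint reading; then your first paragraph plus the final lifting step (identical feasible profiles and identical costs give identical NE sets and optima) is already the whole proof. Alternatively, strengthen the penalty so that it charges every buyer of an edge that is adjacent in $G(\mathbf{s})$ to an equally labeled edge, including edges bought towards a common vertex by third parties.

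With such a penalty the rest of your plan works, and the step you flagged as the main obstacle is actually the easy one. A penalized agent pays at least $1+K\ge n+1$, whereas direct edges with fresh labels to everyone it is not yet adjacent to cost at most $n-1$ and incur no penalty. The same comparison bounds every NE cost by $n-1$, so deviations that incur a penalty are never profitable.
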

\begin{proof}
    Since no two adjacent edges have the same label, no temporal path can contain two subsequent edges of the same label, and for any strategy profile $\mathbf{s}$, $G(\mathbf{s})$ does not contain non-strict paths, meaning allowing non-strict paths does not change the result.
\end{proof}

\subsection{Strict}
\label{sec:nocommon}
\begin{lemma}
    \label{lm:nocommontree}
    Given a temporal graph $G = (V_G, E_G, \lambda_G)$, let $v \in V_G$ and $e_1 \in E_G$ with $v \in e_1$. For $l = \lambda_G(e_1)$ and $l_\text{max} = \max\{ \lambda_G(e') \mid e' \in E_G,\ v \notin e' \}$ the number of vertices reachable from $v$ using strict, and therefore acyclic, temporal paths starting with $e_1$, is at most $\max\left\{2^{l_\text{max}-l}, 1\right\}$.
\end{lemma}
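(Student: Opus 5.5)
The plan is to prove a slightly more general statement by induction on the gap $l_\text{max} - l$, relying throughout on the assumption of this section that the labeling $\lambda_G$ is proper. Write $e_1 = \{v,u\}$. A strict temporal path from $v$ that starts with $e_1$ and is acyclic never returns to $v$. So after $e_1$ it uses only edges not containing $v$, and all of these have labels in $(l, l_\text{max}]$. The vertices reached by such paths are $u$ together with the vertices reached by strict paths that start at $u$ at time $l$ and stay in $G - v$. This reduces the lemma to bounding the reach of such paths in $G - v$.

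Concretely, I would define $N(x, t)$ as the maximum number of vertices, counting $x$ itself, that can be reached from a vertex $x \neq v$ by strict temporal paths in $G - v$ whose first edge has label strictly greater than $t$. Every label in $G - v$ is at most $l_\text{max}$, so the claim to prove is
\begin{equation*}
N(x,t) \le \max\left\{2^{l_\text{max}-t}, 1\right\}.
\end{equation*}
Applying the claim to $x = u$ and $t = l$ gives the lemma. In the base case $t \ge l_\text{max}$, no edge of $G - v$ has label greater than $t$, so only $x$ itself is reached and $N(x,t) = 1$.

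For the inductive step, take $t < l_\text{max}$. Every vertex reachable from $x$ other than $x$ itself lies behind some first edge $\{x,y\}$ with label $t' \in (t, l_\text{max}]$, and is reached by a strict path continuing from $y$ with labels greater than $t'$. The set reached through that edge therefore has size at most $N(y, t')$, which by induction is at most $2^{l_\text{max}-t'}$. This is the key step where properness is used: since no two edges at $x$ share a label, each value $t'$ occurs on at most one edge at $x$. The first edges thus contribute at most one term per label value, and we get
\begin{equation*}
N(x,t) \le 1 + \sum_{t'=t+1}^{l_\text{max}} 2^{l_\text{max}-t'} = 1 + \left(2^{l_\text{max}-t} - 1\right) = 2^{l_\text{max}-t}.
\end{equation*}
Overlaps between the sets reached through different first edges only decrease the count, so the union bound is enough.

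The main obstacle is making sure the reduction to $G - v$ is sound. Acyclicity is what rules out paths that re-enter $v$ and then use edges at $v$ with labels larger than $l_\text{max}$. The edge $e_1$ itself has label exactly $l$, so it is never a valid strict continuation from $u$. Finally, it must be stated explicitly that properness is assumed: without it, many edges at $x$ could share the same label and the geometric sum would fail. This holds throughout Section~\ref{sec:proper}.
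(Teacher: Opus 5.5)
Your proof is correct and uses the same idea as the paper. In both, properness gives at most one edge per label value at each vertex, and acyclicity keeps all labels after $e_1$ in $\{l+1,\dots,l_\text{max}\}$. The paper counts the resulting increasing label sequences directly as subsets of that set, giving $2^{l_\text{max}-l}$ paths with one endpoint each. Your induction $N(x,t)\le 1+\sum_{t'=t+1}^{l_\text{max}} 2^{l_\text{max}-t'}$ is the recursive form of that same subset count, and it has the advantage of stating explicitly the properness assumption that the lemma's wording leaves out.
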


\begin{proof}
    Let $p = (e_1, \dots e_i)$ be a strict acyclic temporal path. We consider the cases $l \geq l_\text{max}$ and $l < l_\text{max}$.\\

    \noindent
\textbf{Case 1:} $l \geq l_\text{max}$. By our assumption $\lambda_G(e_1) \geq l_\text{max} \geq \lambda_G(e_2)$. As $p$ is strict, $e_2$ cannot be in $p$ and $p$ has at most length 1. This makes the number of vertices reached by $p = 1$.\\

    \noindent
    \textbf{Case 2:} $l < l_\text{max}$. Consider the set of labels $\{l+1, \dots, l_\text{max}\}$. Any strict temporal path starting with $e_1$ corresponds to a strictly increasing sequence of labels starting at $l$, with subsequent labels chosen from $[l+1, l_\text{max}]$. Since adjacent edges must have distinct labels, each label sequence corresponds to at most one unique path. The number of such sequences is the number of subsets of $\{l+1, \dots, l_\text{max}\}$, i.e., $2^{l_\text{max}-l}$. Thus, the number of reachable vertices is at most $2^{l_\text{max} - l}$.
\end{proof}


\begin{corollary}
    Any temporal spanner in the proper Label model has lifetime $t \geq \log n$.
\end{corollary}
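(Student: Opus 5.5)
The corollary follows from \Cref{lm:nocommontree} with a counting argument. Let $G$ be a temporally connected graph with a proper labeling and lifetime $t$. In the proper label model the penalty $\text{P}^{>0}$ applies, so all labels lie in $\{1,\dots,t\}$. Fix any vertex $v$. Every vertex $u \neq v$ must be reachable from $v$ by a strict temporal path. By \Cref{propernonstrict}, non-strict and strict paths coincide under a proper labeling, so strict paths suffice. Each such path begins with some edge $e_1$ incident to $v$.

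Apply \Cref{lm:nocommontree} to each incident edge $e_1$ with label $l = \lambda_G(e_1)$. The number of vertices reachable via paths starting with $e_1$ is at most $\max\{2^{l_\text{max}-l},1\}$, and this is at most $2^{t-l}$ because $l_\text{max}\le t$. Properness forces distinct labels on the edges incident to $v$, so these edges carry pairwise different labels from $\{1,\dots,t\}$. Summing over them gives
\begin{equation*}
|R_G(v)\setminus\{v\}| \le \sum_{l=1}^{t} 2^{t-l} = 2^{t}-1.
\end{equation*}
Temporal connectivity requires this count to be at least $n-1$. Hence $2^t - 1 \ge n-1$, which gives $t \ge \log n$.

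The main subtlety is in the use of \Cref{lm:nocommontree}. Its bound counts vertices by label sequences, which gives a bound on reachable endpoints; multiple paths reaching the same vertex only help. I must also make sure I am not relying on a bound of the form $2^{l_\text{max}-l}$ with $l_\text{max}$ taken over edges not containing $v$ in a way that breaks when replacing it by $t$. Since $2^{x}$ is monotone, that replacement is safe.

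If I instead want the bound $t\ge \log n$ without the distinct-label step at $v$, I would bound the degree of $v$ by $t$, again using properness. Either way the distinctness of labels around $v$ is the key ingredient. Without it, each of arbitrarily many incident edges could reach many vertices, and the sum would no longer be controlled.
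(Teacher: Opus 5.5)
Your proof is correct and follows essentially the same argument as the paper: apply \Cref{lm:nocommontree} to each edge at $v$, use properness so that these edges carry pairwise distinct labels in $\{1,\dots,t\}$, and sum $2^{t-l}$ to get at most $2^t-1$ reachable vertices. The only difference is presentation. You argue directly from $2^t-1\ge n-1$, while the paper argues by contradiction from the assumption $t\le \log n-1$. Your direct form is slightly cleaner, because it gives $t\ge\log n$ exactly even when $n$ is not a power of two.
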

\begin{proof}
    Assume the graph had lifetime $\leq \log n - 1$. Then the maximum number  of other vertices $N$ a vertex could reach by \Cref{lm:nocommontree} would be
    \begin{align*}
        N \leq \sum_{i = 1}^{\log n - 1} 2^{(\log n - 1) - i} = \sum_{i = 1}^{\log n - 1} 2^{i} = \frac{n}{2} - 1.
    \end{align*}
    This is a contradiction as each vertex needs to reach $n-1$ vertices in total. Thus, it follows that $t \geq \log n$.
\end{proof}

\begin{theorem}
    \label{properpos}
    For proper labels $\text{PoS}_{> 0, \neq}^{<, 0} = 1$.
\end{theorem}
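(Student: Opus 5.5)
Since \Cref{lem:socoptimum} states that the social optimum is $2n-4$ even when the labeling must be proper, it suffices to exhibit, for every $n \geq 4$, a Nash equilibrium of the proper label model with exactly $2n-4$ edges. All labels must be positive, and no agent may buy an edge that collides in label with an edge already incident to either endpoint. My first choice is to reuse the construction from \Cref{thm:unilabpos}. I would first check whether it is already proper.

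In that construction, the outer ring $v_a v_b v_c v_d$ alternates labels $n$ and $n+1$, so it is proper. Each $v_e^k$ has one low edge, labelled $\lfloor k/2 \rfloor$ (or $k/2$), to $v_a$ or $v_c$, and one high edge, labelled above $n+1$, from the other one of $v_a, v_c$. The low labels at $v_a$ and $v_c$ repeat in the original indexing, since pairs of vertices $v_e^k$ share $\lfloor k/2 \rfloor$ but attach to different hubs. I would therefore re-index so that the low labels incident to $v_a$ are pairwise distinct, the low labels incident to $v_c$ are pairwise distinct, and the same holds for the high labels at each hub. Concretely, the $j$-th vertex attached low to $v_a$ gets label $j$, and the hub buying back to it uses label $n+1+j$ (and symmetrically for $v_c$). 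All these labels lie strictly between or outside the ring labels $n, n+1$. This makes the profile proper, and temporal connectivity follows exactly as in \Cref{thm:unilabpos}: the inner vertices go up into the ring, and the ring goes down to the inner vertices through the high edges. The edge count is $4 + 2(n-4) = 2n-4$.

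Stability is the main step. For an inner vertex $v_e$, I would argue as before: without its low edge, its only incident edge has a label above $n+1$. Any replacement edge costs at least as much, and buying it reduces nothing. For the hubs $v_a, v_c$, I would argue that a single edge cannot replace two high edges: each target $v_e$ has only one other edge, which has a low label, so it cannot be continued after a high label. Hence a hub reaches $v_e$ only through an edge to $v_e$ itself, or through a path ending in $v_e$'s low edge. That second option requires arriving at the other hub strictly before $v_e$'s low label, which is impossible with positive labels once I ensure the hub's incident labels start low enough. This is the delicate point; if it fails, I would assign $v_e$'s low edge the smallest labels, $1,2,\dots$, so that no strict path can end with it. For the ring vertices $v_b, v_d$, each ring edge they buy is their unique fast connection, so they gain nothing by deviating. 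Here the proper constraint only shrinks the set of possible deviations, which helps. Together with \Cref{lem:socoptimum}, this gives $\text{PoS} = 1$.

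The obstacle I expect is verifying that the hubs' deviations to fewer edges fail, since a deviating hub may now choose fresh labels that interact with the ring. My plan is to case on the label $\ell$ of a hypothetical new edge from a hub to some $x$, and show that from $x$ the strict continuations reach at most one of the hub's former high-edge targets.
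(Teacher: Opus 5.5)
Your construction is the paper's: the ring labelled $n,n+1$, the parity split into $m_a$ and $m_c$ inner vertices attached low to $v_a$ and $v_c$ respectively, the $j$-th one at each hub with low label $j$ and bought back at $n+1+j$. It is proper, positive and has $2n-4$ edges. The gap is in the one step that carries the theorem, the stability of the hubs, and the reasons you give there are false.

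Call the $m_c$ inner vertices attached to $v_c$ the targets of $v_a$. A target with low label $j$ can be reached through $v_c$ by arriving there before time $j$. Positivity excludes this only for $j=1$. Your fallback (``assign the smallest labels $1,2,\dots$ so that no strict path can end with it'') fails for $j\ge 2$, since a $1$-edge can precede a $2$-edge. Concretely, if labels merely had to be positive, $v_a$ could replace its $m_c+1$ edges (ring edge plus high edges) by two edges: $\{v_a,v_c\}$ with label $1$, and a direct edge to $v_c$'s label-$1$ target. From $v_c$ at time $1$ it reaches all other targets, $v_d$, $v_b$ and the rest. This is a strict improvement once $m_c\ge 2$, i.e.\ for $n\ge 7$. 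So any hub argument, including your planned case analysis on $\ell$, must use properness. For the same reason, stability does not follow from the unconstrained model alone. What actually carries it is the remark in the proof of \Cref{thm:unilabpos} that the hubs ``cannot buy edges with that label'', which is a properness argument.

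The missing idea is that the low labels at the two hubs overlap. $v_a$ carries $1,\dots,m_a$ and $v_c$ carries $1,\dots,m_c$, with $|m_a-m_c|\le 1$. All of these sit on edges bought by inner vertices, so a deviating hub cannot remove them. The argument then runs as follows.
\begin{itemize}
\item A new edge $\{v_a,v_c\}$ must avoid $1,\dots,m_c$, so from $v_c$ it continues along no low edge.
\item Any other path reaching $v_c$ before time $j\le m_c$ must end with a low edge $\{v_e',v_c\}$ of label $i<j$. It must therefore enter $v_e'$ by a new edge from $v_a$ with label below $i\le m_c-1\le m_a$, and every such label is already used at $v_a$.
\item Hence every edge $v_a$ buys reaches at most its own endpoint among the targets and nothing beyond it. Since $v_b$ is not reachable over edges bought by others, $v_a$ needs all $m_c+1$ of its edges.
\end{itemize}

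This reverses your diagnosis. Equal labels at \emph{different} hubs are not adjacent, so the original indexing violated no properness; its only defect is the label $0$ at $k=1$. The cross-hub repetition you set out to remove is exactly what makes the profile stable. With disjoint low-label sets, or with $|m_a-m_c|\ge 2$, a hub could again shortcut by a cheap edge into one of the other hub's low edges. Your re-indexing happens to keep the overlap, but you need to state it and the balanced split explicitly and build the hub argument on them.
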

\begin{proof}
    Consider the NE construction from \Cref{thm:unilabpos}. The outer ring has proper labeling and the edges towards and from the center vertices are labeled in ascending order such that they are either strictly smaller or strictly larger than the edges in the outer ring. Thus, the construction is also a NE in the proper label model and $\text{PoS}_{> 0, \neq}^{<, 0} = 1$.
\end{proof}
\begin{theorem}
    \label{properpoa}
    For proper labels $\text{PoA}_{> 0, \neq}^{<, 0} \in \Omega(\log n)$.
\end{theorem}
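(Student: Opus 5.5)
We need a Nash equilibrium under proper labels and strict paths whose social cost is $\Omega(n\log n)$. Dividing by the optimum $2n-4$ from \Cref{lem:socoptimum} then gives $\text{PoA}_{> 0, \neq}^{<, 0} \in \Omega(\log n)$.

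\textbf{Candidate equilibrium.} I would take the hypercube with its standard proper labeling. Let $n=2^d$ and identify $V$ with $\{0,1\}^d$. Join two vertices by an edge iff they differ in exactly one coordinate $i$, and give that edge label $i$. The labeling is proper, since two edges at a vertex flip different coordinates and so carry different labels. All labels lie in $\{1,\dots,d\}$, so there is no $P^{>0}$ penalty. The graph has $\frac{n}{2}\log n$ edges. For the buying assignment, let the endpoint with a $0$ in coordinate $i$ buy the $i$-edge; any fixed orientation should work.

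\textbf{Temporal connectivity.} Fix $u,w$ and let $i_1<\dots<i_k$ be the coordinates where they differ. Flipping these coordinates in increasing order gives a strict temporal path from $u$ to $w$. So every agent reaches everyone and pays only for the edges it buys.

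\textbf{Equilibrium property (the main obstacle).} I would proceed as follows.
\begin{enumerate}
\item Fix an agent $v$ that buys the edge $\{v,v\oplus e_i\}$ with label $i$. After $v$ drops this edge, the unique strict path from $v$ to $v\oplus e_i$ is gone. In the remaining graph, a strict path from $v$ to $v\oplus e_i$ must flip coordinate $i$ an odd number of times, using labels in increasing order. Strictness forbids using label $i$ twice, so exactly one $i$-edge is used. Every other coordinate must be flipped an even number of times, which strictness also forbids. Hence the only strict path is the removed edge.
\item The agent might compensate by buying new edges with arbitrary labels. The difficulty is that it could replace several of its edges by fewer edges carrying fresh labels, for example a label larger than $d$, or non-integer-like gaps. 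I would argue by counting with \Cref{lm:nocommontree}. Suppose $v$ buys $m$ edges. Through the fixed remainder of the hypercube, a first edge with label $l$ reaches roughly at most $2^{d-l}$ vertices via the paths available after it. To cover all of $\{0,1\}^d$, agent $v$ needs first edges whose reach sums to $n-1$.
\item The first edges $v$ does not buy are its incoming ones, with labels equal to the coordinates where $v$ has a $1$. Their reaches are fixed. Each remaining coordinate $i$ leaves a sub-cube of size $2^{d-i}$ (coordinates greater than $i$ free) that must be reached by some newly bought edge.
\item The key claim is that one new edge can serve at most one such sub-cube. Its endpoint $x$ has a fixed $d$-bit position. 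Using labels $>l$ it reaches only vertices that differ from $x$ in coordinates greater than the new label, so it covers at most one "prefix class". The sub-cubes for different $i$ have different prefix patterns, so they cannot all be served by one edge. Making this rigorous is the step I expect to need the most care.
\end{enumerate}
If the hypercube orientation proves problematic, the fallback is to let one canonical endpoint buy every edge. This doesn't matter for the count, since social cost is just the number of edges.

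\textbf{Conclusion.} With the equilibrium established, the ratio is $\frac{(n/2)\log n}{2n-4}\in\Omega(\log n)$. For $n$ not a power of two, I would pad by attaching the extra vertices as pendant edges with distinct small labels, adjusting properness.
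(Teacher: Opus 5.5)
Your construction (the hypercube with label $i$ on every coordinate-$i$ edge), the connectivity argument and the final ratio match the paper. The gap is in Step~4, which is exactly where the equilibrium property is decided. The prefix argument does not show that one new edge serves at most one sub-cube. Write $A_i=\{w : w_k=v_k \text{ for } k<i,\ w_i\neq v_i\}$ for your sub-cube for coordinate $i$. An edge $\{v,x\}$ with label $l$ only fixes the first $l$ coordinates of what it reaches, so if $l<i<j$ it can meet both $A_i$ and $A_j$. Concretely, take $d=3$ and $v=000$, which buys all three of its edges. An edge from $v$ to $001$ with label $1$ reaches $001\to011\to010$ along labels $1,2,3$, i.e.\ all of $A_2\cup A_3$. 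Nothing in your argument uses that the \emph{deviating} edges must respect properness, and without that the hypercube is not an equilibrium at all. Indeed, $v=000$ could buy only label-$1$ edges to $100$ and to $001$ and still reach all seven other vertices with two edges instead of three.

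The missing ingredient is properness of the deviated profile. Used at the far endpoint, it makes Step~4 immediate. Every edge not bought by $v$ survives, so each $x\neq v$ still carries all labels $1,\dots,d$. The only exception is a neighbour $v\oplus e_j$ whose $j$-edge $v$ dropped, which lacks only label $j$. Hence a new edge either re-creates a dropped edge (reaching exactly $A_j$) or has label $>d$ and reaches only $x$. Since the $A_i$ are disjoint and nonempty, $v$ needs its own edge for each $A_i$ with $v_i=0$.

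The paper's argument is instead a global count via \Cref{lm:nocommontree}, which needs properness only at $v$. Then $v$ has at most one incident edge per label. An edge with label $l\le d$ reaches at most $2^{d-l}$ vertices, and one with label $>d$ reaches a single vertex. Let $M\subseteq\{1,\dots,d\}$ be the set of labels absent at $v$. Reaching all $n-1=\sum_{l=1}^{d}2^{d-l}$ vertices forces at least $\sum_{l\in M}2^{d-l}\ge|M|$ edges with labels above $d$, i.e.\ at least $d$ incident edges. In that weaker setting a single edge can still serve two sub-cubes, as in the example above; the loss only shows up in the count.

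Separately, your pendant padding for $n\neq 2^d$ fails under the hypercube labeling. A degree-one vertex hung on $u$ with label $l$ needs $u$ to reach everyone using labels above $l$ and to be reached by everyone using labels below $l$. But $u$ reaches $u\oplus e_1$ only at time $1$, and $u\oplus e_d$ reaches $u$ only at time $d$. It is simpler to state the bound for $n=2^d$, as the paper implicitly does.
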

\begin{proof}
    Consider a $k$-dimensional hypercube $Q_k = (V_k, E_k)$. As shown in \cite{kempe_connectivity_2002}, there exists a labeling $\lambda\colon E_k \mapsto[\log n]$, such that the resulting temporal graph $G = (V_k, E_k, \lambda)$ is a temporal spanner (seen in \Cref{fig:hypercube}). Now we will prove it is also a NE. By \Cref{lm:nocommontree}, with $l_\text{max} = \log n$, each vertex can reach at most $\sum_{i = 1}^{\log n} 2^i = n - 1$ other vertices. As any edge labeled with $l > \log_n = l_\text{max}$ would reach only a single vertex, there is no smaller set of different labels that can also reach $n-1$ vertices
    and thus no agent can improve their strategy. As a hypercube with $n$ vertices has $\Theta(n\log n)$ edges and the social optimum is $2n-4$ following from \Cref{lem:socoptimum},
        \begin{equation*}
            \text{PoA} \geq \frac{\Theta(n\log n)}{\Theta(n)} = \Theta(\log n).\qedhere
        \end{equation*}
    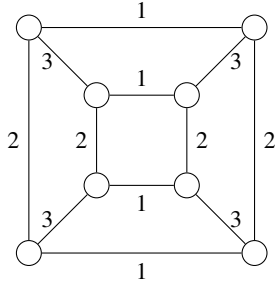
\begin{figure}
    \centering
    \begin{tikzpicture}[scale=0.6, font=\small, baseline=(current bounding box.center)]

    \node[circle, draw] (A) at (0,0) {};
    \node[circle, draw] (B) at (2,0) {};
    \node[circle, draw] (C) at (2,2) {};
    \node[circle, draw] (D) at (0,2) {};
    
    \node[circle, draw] (E) at (-1.5,-1.5) {};
    \node[circle, draw] (F) at (3.5,-1.5) {};
    \node[circle, draw] (G) at (3.5,3.5) {};
    \node[circle, draw] (H) at (-1.5,3.5) {};
    \draw (A) -- node[below] {1} (B);
    \draw (B) -- node[right] {2} (C);
    \draw (C) -- node[above] {1} (D);
    \draw (D) -- node[left] {2} (A);
    
    \draw (E) -- node[below] {1} (F);
    \draw (F) -- node[right] {2} (G);
    \draw (G) -- node[above] {1} (H);
    \draw (H) -- node[left] {2} (E);
    
    \draw (A) -- node[left] {3} (E);
    \draw (B) -- node[right] {3} (F);
    \draw (C) -- node[right] {3} (G);
    \draw (D) -- node[left] {3} (H);
    \end{tikzpicture}
    \caption{Flattened 3D hypercube, NE for $n = 8$.}
    \label{fig:hypercube}
    \end{figure}
\end{proof}

\section{Arbitrary Low Label Model}
\label{sec:nolowest}
In this section, the cost of each label is the same, but we impose a lower bound on the value that an agent can place. While the PoA and PoS bounds are exactly the same as the uniform label cost model in the non-strict model, we get a big improvement on the PoA for the strict model.

\subsection{Non-strict Paths}
\begin{proposition}
\label{nonstrictarbitrary}
    The PoA and PoS results are identical to the uniform label cost model for non-strict paths.
\end{proposition}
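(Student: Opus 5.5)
The plan is to show that, with non-strict paths, the arbitrary low label model is essentially the uniform label cost model $\text{PoA}_{> 0}^{\leq, 0}$ under a shift of labels. The reduction rests on one observation: non-strict reachability depends only on the relative order of labels. Relabeling every edge through a strictly increasing map $\phi\colon \mathbb{Z}\to\mathbb{Z}$ therefore preserves every temporal path in both directions. Translating by a constant, $\phi(l)=l+c$, is such a map. Since both models price all admissible labels equally, a translation changes no agent's edge cost, only whether the labels used are admissible.

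First, for the PoS, I would reuse the construction from \Cref{prop:posuniform}, where every vertex buys an edge to a center. I would give all these edges the same admissible label, say the lower bound itself. The result is a single-label tree, hence temporally connected. It has $n-1$ edges, which matches the lower bound $n-1$ on any connected spanning graph, so it is socially optimal. It is also a NE: every edge is a bridge, so dropping any edge disconnects its buyer from some vertex, while rewiring it to another endpoint keeps the same number of edges. Hence the PoS is $1$.

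Second, for the PoA lower bound, I would take the grid construction from \Cref{thm:unilabel2}, which uses labels $1$ and $2$, and shift its labels into the admissible range (labels $L$ and $L+1$ if $L$ is the lower bound). Any deviation available to an agent in the new model, shifted back by $-c$, is a deviation in the uniform model. It has the same edge count and reaches the same set of vertices, because both non-strict reachability and the equilibrium argument of \Cref{thm:unilabel2} depend only on the order of labels. The uniform-model deviation might use labels below $1$, which the penalty $P^{>0}$ forbids there. I would handle this by using, in the uniform model, a large enough shift, or by noting that the argument in \Cref{thm:unilabel2} never invoked the lower bound at all. Hence the construction is a NE and gives the lower bound $2-\mathcal{O}(1/\sqrt{n})$.

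Third, for the upper bound, I would repeat the argument of \Cref{lm:upperPoA1}. It uses a reachability tree rooted at $v$ and a deviation in which $v'$ buys a single edge to $v$ with label $1$. I would replace label $1$ by the minimal admissible label. That label is at most every label in the tree, so non-strict reachability through the tree still works. The same counting then gives at most $2(n-1)-1$ edges and a PoA $<2$.

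The main obstacle is step two: making sure the equivalence of deviation spaces under translation is rigorous, that is, checking that neither model lets agents exploit labels outside the other's range in an order-relevant way. Because only relative order matters and both ranges are unbounded above, any finite deviation can be translated into range, which settles it.
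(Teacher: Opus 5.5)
\textbf{Verdict: genuine gap in step 2.} Your steps 1 and 3 are fine. In step 3 there is no minimal admissible label, but the minimum label of the reachability tree does the job.

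\textbf{Where the gap comes from.} Despite the sentence opening Section~\ref{sec:nolowest}, the arbitrary low label model is the variant with $\mathcal{P}=\emptyset$. The positivity penalty is dropped, so labels may be arbitrarily low. This is why Figure~\ref{fig:nolowest} uses $0$-labels and why Theorem~\ref{arbitrarypoa} can hold. If the model merely moved the lower bound to some $L$, translation by $L-1$ would be an isomorphism of the two games. The strict results would then coincide too, and the all-$L$ clique would be a strict-path NE with PoA $\Theta(n)$.

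\textbf{Why translation does not close step 2.} In the actual model the deviation space is unbounded below. To bring a deviation $S'_v$ with some label $l_0<1$ into range, you must shift the whole profile. That gives a deviation from $\mathbf{s}+c$, not from $\mathbf{s}$. Whether $\mathbf{s}+c$ is an NE of the uniform model is the same question again, since agents there may use labels in $[1,c]$ below every existing label. No fixed shift helps, because $l_0$ is unbounded. Your fallback, that the proof of Theorem~\ref{thm:unilabel2} ``never invoked the lower bound'', is only an assertion. That proof never examines labels below $1$, because the penalty excludes them.

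\textbf{The missing idea.} This is the whole content of the paper's proof. Under non-strict reachability, \emph{every non-decreasing} relabeling maps temporal paths to temporal paths, not only strictly increasing ones. Take a profile whose labels are all at least $1$ and any deviation $S'_v$, and replace each label $l$ of $S'_v$ by $\max\{l,1\}$. This map has the following properties:
\begin{itemize}
\item it fixes the other agents' edges;
\item it commutes with taking the minimum of two labels bought on the same pair, since the other label is at least $1$;
\item it keeps the number of edges;
\item it keeps every non-decreasing label sequence non-decreasing.
\end{itemize}
So the clamped strategy is a penalty-free uniform-model deviation that reaches at least as many vertices at no higher cost. Hence every NE of the uniform model, in particular the grid, remains an NE when arbitrarily low labels are allowed. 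In the paper's words, a label below $\lambda_{\min}$ is as good as $\lambda_{\min}$.

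Translation is genuinely needed only for the converse direction: an NE of the new model, shifted upward, is a uniform NE. You bypass that direction anyway by redoing Lemma~\ref{lm:upperPoA1} directly.
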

\begin{proof}
    Let $\lambda_\text{min}$ be the minimal label of a strategy profile $\mathbf{s}$. Since paths are non-strict, buying an edge with a label smaller than $\lambda_\text{min}$ is equally as good for an agent as simply buying an edge with label $\lambda_\text{min}$. Thus, both PoA and PoS results remain the same as there is no additional improving move.
\end{proof}

\subsection{Strict Paths}

\begin{theorem}
    \label{arbitrarypoa}
    In the Arbitrary Low Label model with strict paths $\text{PoA}_{}^{<, 0} \leq 1 + \frac{1}{n-2}$.
\end{theorem}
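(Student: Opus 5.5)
The plan is to show that every Nash equilibrium of the strict Arbitrary Low Label model has at most $2n-2$ edges. By \Cref{lem:socoptimum} the social optimum is $2n-4$, so this gives
\begin{equation*}
\text{PoA}^{<,0} \leq \frac{2n-2}{2n-4} = 1 + \frac{1}{n-2}.
\end{equation*}
Since there is no label-cost term and no penalty, the social cost of an equilibrium is just its number of edges. The whole proof is therefore an edge-counting argument for an arbitrary NE $\mathbf{s}$, which is temporally connected because $K$ is large.

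The tool specific to this model is the \emph{undercut deviation}. Labels are unbounded from below, so an agent $v$ can drop its entire strategy $S_v$ and instead buy a single edge $\{v,u\}$ with label $\lambda_{\min}-1$, where $\lambda_{\min}$ is the smallest label in $G(\mathbf{s})$. Every strict temporal path that starts at $u$ in $G(\mathbf{s}_{-v})$ can then be prefixed by this new edge. Hence $v$ reaches $\{u\}\cup R_{G(\mathbf{s}_{-v})}(u)$ with one edge.

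This yields the first lemma. If for some $u$ we have $R_{G(\mathbf{s}_{-v})}(u)\supseteq V\setminus\{v,u\}$, then $|S_v|\leq 1$. Call such an agent \emph{cheap}. The remaining agents are \emph{expensive}: removing their own edges destroys the spanning reachability of every other vertex. I will use the undercut deviation together with a variant that keeps one of $v$'s edges and undercuts another. The aim is to show that an expensive agent buys at most $2$ edges, or more precisely that the surplus $\sum_v (|S_v|-1)$ over expensive agents is at most $n-1$ minus the number of agents buying nothing.

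The counting step then writes
\begin{equation*}
|E_{\mathbf{s}}| \leq \sum_v |S_v| = \#\{v : S_v\neq\emptyset\} + \sum_{v} \bigl(|S_v|-1\bigr)^+ .
\end{equation*}
Cheap agents contribute nothing to the second sum. For each expensive agent $v$, I will charge its surplus edges to the vertices whose reachability it is the unique provider of; if $u$ exists, those are the vertices outside $R_{G(\mathbf{s}_{-v})}(u)$. The charging should be injective across different expensive agents and should avoid at least two vertices (for instance $v$ itself and the endpoint of the globally minimal edge). That gives surplus at most $n-2$ and a total of at most $n+(n-2)=2n-2$.

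The main obstacle is controlling expensive agents. The natural first try is to choose $u$ as the endpoint, not equal to $v$, of the globally earliest edge not bought by $v$. One then argues that in $G(\mathbf{s}_{-v})$ this $u$ still reaches everything except vertices whose only "late" access runs through edges in $S_v$. Each such vertex forces a distinct edge of $S_v$, which is what makes the charging injective within one agent. Making it injective across different agents, where the same vertex might be claimed twice, is the delicate part. There I would use minimality: a vertex claimed by two agents could be reached by one of them through the other's undercut edge, contradicting the equilibrium. If this global charging fails, the fallback is to prove directly that at most one agent is expensive and that it buys at most $n-1$ edges. That still gives the $2n-2$ bound, since all other agents are cheap.
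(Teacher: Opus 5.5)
Your target of $2n-2$ purchases and the undercut deviation are the right ingredients, but there is a genuine gap at the step you yourself call delicate. The cheap/expensive lemma is correct. It covers only agents $v$ for which some $u$ still reaches $V\setminus\{u,v\}$ in $G(\mathbf{s}_{-v})$, and for expensive agents nothing is actually proved. The charging is only something that ``should be injective,'' and cross-agent injectivity is left to an unspecified minimality argument.

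Your fallback claim that at most one agent is expensive is false. Take the $n=6$ equilibrium in \Cref{fig:nolowest} and its highlighted top-left vertex $D$, which buys the edges labeled $2$ and $3$ toward $A$ and $E$. After deleting these two edges, $A$, $B$ and $F$ cannot strictly reach $E$, while $C$ and $E$ cannot reach $A$. So $D$ is expensive, and by symmetry so is the other highlighted vertex.

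The underlying problem is that deleting all of $S_v$ can destroy structure every other vertex relies on. Asking for a single $u$ that still spans $G(\mathbf{s}_{-v})$ is therefore too strong a requirement to control the count.

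The missing idea is to fix one global structure and only let deviators remove edges outside it. Take any root $r$ and a reachability tree $\mathcal{T}_{G(\mathbf{s})}(r)$; it has $n-1$ edges because a NE is temporally connected.

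\begin{itemize}
\item Suppose some $v\neq r$ buys two edges outside this tree. It can drop both and buy one edge to $r$ labeled $\lambda_{\min}-1$. The tree stays intact: if $\{v,r\}$ is already a tree edge, only its label drops, and it is the first edge of every root path through it. So $r$ still reaches everyone, and $v$ reaches everyone via $r$. Hence each $v\neq r$ buys at most one non-tree edge.
\item The root $r$ buys no non-tree edge, since its tree alone already gives it full reachability.
\item No edge is bought twice in a NE, since the second purchase would be redundant.
\end{itemize}

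This gives at most $(n-1)+(n-1)=2n-2$ purchases. It is exactly the argument of \Cref{lm:upperPoA1} carried over to strict paths, which is what the paper does. Together with \Cref{lem:socoptimum} it yields $1+\frac{1}{n-2}$, with no cheap/expensive case distinction needed.
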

\begin{proof}
        The proof works the same way as the proof for \Cref{lm:upperPoA1}, only with the social optimum now being $2n-4$ instead of $n-1$.
\end{proof}
\noindent

\begin{theorem}
\label{arbitrarypos}
    Up to 6 vertices, there exist NEs in the Arbitrary Low Label model with $\text{PoS}_{}^{<, 0} = 1$.
\end{theorem}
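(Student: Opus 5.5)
We need to show that for $n\le 6$ there is an NE whose number of edges equals the social optimum. The social optimum here is $2n-4$ for $n\ge 4$ by \Cref{lem:socoptimum}. For $n=2,3$ it is $n-1$: a single edge, respectively a path $a$--$b$--$c$ whose middle vertex can be reached at an arbitrarily small label. The plan is to give, for each $n\in\{2,\dots,6\}$, an explicit strategy profile $\mathbf{s}$ with $|E_\mathbf{s}|$ equal to the optimum, and then verify two things. First, $G(\mathbf{s})$ is temporally connected under strict reachability. Second, no agent has an improving deviation.

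\textbf{Key difficulty.} In this model there is no penalty for labels below $1$, so the clique argument of \Cref{thm:anystrictPoA} breaks. Any agent can buy an edge with a label smaller than every existing label and then ride the rest of the graph. The equilibrium check must therefore rule out deviations of the form ``drop $k\ge 2$ edges and buy $k-1$ edges with arbitrary (possibly very low) labels''. For a vertex that buys only one edge, no deviation can reduce its cost: it needs at least one incident edge to reach anyone.

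\textbf{Small cases.}
\begin{itemize}
  \item $n=2$: one vertex buys the single edge.
  \item $n=3$: $a$ and $c$ each buy an edge towards $b$, with labels $1$ and $2$. Every vertex reaches everyone, and each buyer owns exactly one edge, so no deviation helps.
  \item $n=4$: the 4-cycle of \Cref{thm:unilabpos} with labels $1,2,1,2$, each vertex buying one edge. Every agent buys exactly one edge, so this is an NE.
\end{itemize}

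\textbf{Cases $n=5,6$.} The aim is again a construction in which every agent buys at most one edge. This requires $2n-4\le n$, i.e.\ $n\le 4$, so for $n=5$ (6 edges) and $n=6$ (8 edges) some agents must buy two edges. I would start from the construction of \Cref{thm:unilabpos}, but redistribute ownership so that no agent buys more than two edges. Then, for each agent buying two edges $\{v,x\},\{v,y\}$, I would check that replacing them by a single edge $\{v,z\}$ with any label $\ell$ fails. It suffices to check $\ell$ below all labels, since lowering the label of $v$'s only edge only enlarges the set of continuations. The check is thus: for every $z$, the set $R_{G',\ell}$ reached from $z$ by journeys starting after time $\ell$ must not contain all of $V\setminus\{v\}$, where $G'$ is $G(\mathbf{s})$ with $v$'s edges removed. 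This holds whenever $G'$ minus $v$ is not temporally ``broadcast-complete'' from any single vertex, which is easy to arrange in the 4-cycle-plus-gadget structure: removing $v_a$'s edge to $v_e^k$ isolates $v_e^k$ except for an edge with a small label. Deviations that keep one edge and relabel or redirect the other are handled by the same reachability tables.

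\textbf{Main obstacle.} The $n=5,6$ verification is a finite but fiddly case analysis over choices of target vertex. I would organize it with \Cref{lm:nocommontree}-style counting of reachable sets, and, if needed, fall back to an exhaustive computer search over label orderings. The labels can be normalized to $\{1,\dots,2n-4\}$, since only their relative order matters.
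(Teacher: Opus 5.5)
Your proposal has genuine gaps: the case $n=3$ is wrong, and the real content of the theorem, $n=5,6$, is never constructed.

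\textbf{The case $n=3$.} Under strict reachability a path $a$--$b$--$c$ is never temporally connected. For $a$ to reach $c$ you need $\lambda(\{a,b\})<\lambda(\{b,c\})$, and for $c$ to reach $a$ you need the reverse. In your profile, whichever endpoint holds the later label arrives at $b$ too late to use the other edge. That agent therefore pays $K$ and would buy an extra edge. So the optimum for $n=3$ is $3$, not $2$, and the equilibrium must be a triangle (e.g.\ all labels $1$, each vertex buying one edge), as in the paper. Relatedly, an agent buying a single edge is not automatically stable, because it may have incident edges bought by others. You must check that it loses reachability when it drops its edge. This check does hold for the $1,2,1,2$ four-cycle, so your $n=4$ case survives.

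\textbf{The cases $n=5,6$.} You exhibit no profile, and the route you suggest provably fails. The graph of Theorem~\ref{thm:unilabpos} is $K_{2,n-2}$, with hubs $v_a,v_c$ adjacent to all other vertices.
\begin{itemize}
\item A hub buying two edges can replace them by one edge to the other hub, with a label below all existing ones, and then reach everyone directly. For instance, in Figure~\ref{fig:poastrict}, $v_a$ buys $\{v_a,v_c\}$ with label $0$ and reaches $v_e^2,v_d,v_b,v_e^1$ at times $1,6,7,8$.
\item A degree-two vertex $x$ buying both its edges can reattach to a hub that reaches the other hub avoiding $x$. Such a hub exists, because $v_c\to x\to v_a$ and $v_a\to x\to v_c$ cannot both be journeys.
\end{itemize}
Hence under every ownership each agent buys at most one edge, which contradicts $2n-4>n$.

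Your remark that removing $v_a$'s edge to $v_e^k$ leaves only a small-label edge is exactly the intuition that breaks. The deviator simply reaches the other endpoint of that edge even earlier. What you need is a design in which, after deleting the double buyer's edges, no single vertex can broadcast to everything that buyer is missing.

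The paper's profile achieves this:
\begin{itemize}
\item Take the four-cycle $A,B,C,D$ with labels $1,2,1,2$.
\item Add $E$, which buys a $0$-edge to $C$, and let $D$ buy a $3$-edge to $E$.
\item For $n=6$, symmetrically add $F$, which buys a $0$-edge to $A$, and let $B$ buy a $3$-edge to $F$.
\end{itemize}
Without $D$'s edges, $E$ is reachable only from $C$ or $E$, and from neither can $A$ be reached. So $D$ needs both edges, and by symmetry $B$ does too. Every other agent loses connectivity if it drops its single edge. Deferring this to an unperformed case analysis or computer search leaves the theorem unproved.
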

\begin{proof}
    We can construct equilibria for up to 6 vertices, displayed in \Cref{fig:nolowest}. For $n \leq 4$ every vertex buys at most 1 edge and it is easy to verify that selling its edge would break its connectivity. Now consider graphs with $n \geq 5$, the vertices that buy 2 edges. They buy an edge towards the top and bottom vertex respectively, as they could only reach it by buying an edge towards it or the adjacent vertex, as they only have an incident 0-edge and no adjacent edges with a smaller label. Further, they need to reach the vertex that they buy the 2-edge towards and cannot reach it without buying an edge, as there is no temporal path towards it.
    
    As the social optimum is at least $2n-4$, no smaller temporal spanners exist for $n > 3$. For $n \leq 2$, trivially, no smaller spanners can exist. Lastly, for $n = 3$, since temporal paths are strict, suppose we have only 2 bought edges and without loss of generality $\lambda_\mathbf{s}(\{u, v\}) \leq \lambda_\mathbf{s}(\{v, w\})$. Then $w$ cannot reach $v$, which means no temporal spanner with 2 edges exists. This makes $\text{PoS}=1$ for $n \leq 6$.
    \begin{figure}
    \centering
    
    \newcommand{\graphcell}[2]{
    \begin{tabular}[c]{@{}c@{}}
    \begin{tikzpicture}[->, scale=0.7, font=\small, baseline=(current bounding box.center)]
    #1
    \end{tikzpicture}\\
    #2
    \end{tabular}
    }
    
    \begin{tabular}[c]{>{\centering\arraybackslash}m{0.3\columnwidth} >{\centering\arraybackslash}m{0.3\columnwidth} >{\centering\arraybackslash}m{0.3\columnwidth}}
    \graphcell{
    \node[circle, draw] (A) at (0,0) {};
    }{$n = 1$}
    &
    \graphcell{
    \node[circle, draw] (A) at (0,0) {};
    \node[circle, draw] (B) at (2,0) {};
    \draw (A) -- node[above] {1} (B);
    }{$n = 2$}
    &
    \graphcell{
    \node[circle, draw] (A) at (0,0) {};
    \node[circle, draw] (B) at (2,0) {};
    \node[circle, draw] (C) at (1,1.73) {};
    \draw (A) -- node[below] {1} (B);
    \draw (B) -- node[right] {1} (C);
    \draw (C) -- node[left] {1} (A);
    }{$n = 3$}
    \end{tabular}
    
    \vspace{1cm}
    
    \begin{tabular}[c]{>{\centering\arraybackslash}m{0.3\columnwidth} >{\centering\arraybackslash}m{0.3\columnwidth} >{\centering\arraybackslash}m{0.3\columnwidth}}
    \graphcell{
    \node[circle, draw] (A) at (0,0) {};
    \node[circle, draw] (B) at (2,0) {};
    \node[circle, draw] (C) at (2,2) {};
    \node[circle, draw] (D) at (0,2) {};
    \draw (A) -- node[below] {1} (B);
    \draw (B) -- node[right] {2} (C);
    \draw (C) -- node[above] {1} (D);
    \draw (D) -- node[left] {2} (A);
    }{$n = 4$}
    &
    \graphcell{
    \node[circle, draw] (A) at (0,0) {};
    \node[circle, draw] (B) at (2,0) {};
    \node[circle, draw] (C) at (2,2) {};
    \node[circle, draw, very thick] (D) at (0,2) {};
    \node[circle, draw] (E) at (1,3) {};
    \draw (A) -- node[below] {1} (B);
    \draw (B) -- node[right] {2} (C);
    \draw (C) -- node[above] {1} (D);
    \draw (D) -- node[left] {2} (A);
    \draw (E) -- node[above right] {0} (C);
    \draw (D) -- node[above left] {3} (E);
    }{$n = 5$}
    &
    \graphcell{
    \node[circle, draw] (A) at (0,0) {};
    \node[circle, draw, very thick] (B) at (2,0) {};
    \node[circle, draw] (C) at (2,2) {};
    \node[circle, draw, very thick] (D) at (0,2) {};
    \node[circle, draw] (E) at (1,3) {};
    \node[circle, draw] (F) at (1,-1) {};
    \draw (A) -- node[below] {1} (B);
    \draw (B) -- node[right] {2} (C);
    \draw (C) -- node[above] {1} (D);
    \draw (D) -- node[left] {2} (A);
    \draw (E) -- node[above right] {0} (C);
    \draw (D) -- node[above left] {3} (E);
    \draw (B) -- node[below right] {3} (F);
    \draw (F) -- node[below left] {0} (A);
    }{$n = 6$}
    \end{tabular}
    
    \caption{Nash equilibria for $n$ between 1 and 6. Vertices that buy 2 edges are highlighted.}
    \label{fig:nolowest}
    \end{figure}
\end{proof}

\section{Outlook and Open Questions}
\label{sec:conclusion}

In this paper, we extended the model introduced by \cite{bilo_temporal_2023} to allow agents to choose the labels assigned to each edge and analyzed different label cost functions under two reachability criteria. There are two immediate open questions from our work: (i) What is the PoS in the Proper Labeling model with non-strict reachability, and (ii) What is the PoS in the Monotone Label model with strict reachability?

At the same time, our model still has some simplifying assumptions that would be worth lifting. For example, we still assume the host network is complete, which may not be the case in real-world scenarios. Additionally, we currently assume that travel time is one for every edge, and it would be interesting to consider variable travel time costs. Finally, we can consider scenarios in which the agents may have more intricate objectives than just reaching and being reached by everyone else; they may want to minimize the hop distance of their trip and/or the time it takes to reach their destination.





\bibliographystyle{named}
\bibliography{sample}

\end{document}